\newcommand{\specificthanks}[1]{\@fnsymbol{#1}}
\newcommand{\commentout}[1]{}
\newcommand{\R}{\mathbb{R}}
\newcommand {\Chi} {{\bf \raise 2pt \hbox{$\chi$}} }
\newcommand{\beq}{\begin{equation}}
\newcommand{\eeq}{\end{equation}}
\newcommand{\bea} {\begin{array}{rl}}
\newcommand{\eea} {\end{array}}
\newcommand{\bepa}{\left\{ \begin{array}{l}}
\newcommand{\eepa} {\end{array}\right.}
\newtheorem{theorem}{Theorem}[section]
\newtheorem{lemma}[theorem]{Lemma}
\newtheorem{remark}[theorem]{Remark}
\newtheorem{proposition}[theorem]{Proposition}
\newcommand{\qed}{{ \hfill
                       {\unskip\kern 6pt\penalty 500 \raise -2pt\hbox{\vrule\vbox to 6pt{\hrule width 6pt
                       \vfill\hrule}\vrule} \par}   }}
\title{Toward an integrated workforce planning framework using structured equations}
\author{ \scriptsize{Marie Doumic}\thanks{Inria de Paris, EPC Mamba,  UPMC et CNRS, F75005 Paris, France}
\and{\scriptsize{Beno\^ \i t Perthame}\thanks{Sorbonne Universit\'es, UPMC Univ Paris 06, Laboratoire Jacques-Louis Lions  UMR CNRS 7598,  Inria, F75005 Paris, France} } 
  \and{\scriptsize{Edouard Ribes}\thanks{Email: edouard.augustin.ribes@gmail.com} }
  \and{\scriptsize{Delphine Salort}\thanks{Sorbonne Universit\'es, UPMC, Laboratoire de Biologie Computationnelle et Quantitative UMR CNRS 7238, F75005 Paris, France}}
     \and{\scriptsize{Nathan Toubiana}\thanks{nathan.toubiana@polytechnique.edu} }}
\date{\today}
\begin{document}
\maketitle
\pagestyle{plain}
\pagenumbering{arabic}
\begin{abstract} 
Strategic Workforce Planning is a company process providing best in class, economically sound, workforce management policies and goals. Despite the abundance of literature on the subject, this is a notorious challenge in terms of implementation. Reasons span from the youth of the field itself to broader data integration concerns that arise from gathering information from financial, human resource and business excellence systems.\\ 
\hspace*{0.2cm} This paper aims at setting the first stones to a simple yet robust quantitative framework for Strategic Workforce Planning exercises. First  a method based on structured equations is detailed. It is then used to answer two main workforce related questions: how to optimally hire to keep labor costs flat? How to build an experience constrained workforce at a minimal cost? 
\end{abstract} \vskip .7cm

\noindent{\makebox[1in]\hrulefill}\newline
2010 \textit{Mathematics Subject Classification.} 90B70, 91D35, 92D25, 35B40
\newline\textit{Keywords and phrases.} Human resource planning; Strategic planning; Structured population dynamics; Long time asymptotics

\section{Introduction}
\label{sec:intro}
  
  Strategic Workforce Planning (SWP) examines the gap between staff availabilities (internal and external to the organization) and staffing requirements (to perform tasks in the organization) over time, and prescribes courses of action to narrow such a gap (\cite{doi:10.1108/01437729610110602}). Multiple methodologies exist to sustain it. They all revolve around 5 milestones (\cite{bcg,young2006strategic}): after a first baselining of the population, demographic forecasts are drafted in order to assess the potential evolution of a company's headcount. Then business needs, both in terms of headcount and competencies, are gathered to perform a gap analysis between a company's desired future state and its natural evolution. Finally solutions to bridge the gaps are proposed, agreed upon and implemented.\\ 
\hspace*{0.2cm} If the process in itself seems simple and if many research studies are focused on the topic of strategic workforce planning (see state of the art), SWP is something most companies struggle to implement (\cite{guthridge2008making}). According to the Corporate Executive Board (CEB) latest benchmarks (\cite{talentneuron,changinghigg}), only 10\% of companies really succeed in aligning their workforce plans to meet strategic objectives.  Among the surveyed firms, 70\% failed at drafting a workforce plan and 84\% of them are not confident in their use of labor market trends. The same study stated that 65\% of the respondents felt a disconnection between the business needs and standard Human Resources processes such as recruitment. Therefore, there is a need to jump from methodological milestones to analytics in order to standardize and industrialize the technical aspects of SWP.

\paragraph{State of the art.}

{SWP is a research field which emerged in the 70s, see for instance the seminal books~\cite{bartholomew1976manpower,vajda1978mathematics}. Stochastic formalisms are prominent in the field, including Markov chains and stochastic linear programming, game theory,  convex approximation etc. - see e.g.~\cite{Song200829,georgiou2002modelling,mcclean1997non}  or yet~\cite{DeBruecker20151} for a recent review. 
}


{ In this corpus, s}ome studies aim at determining an optimal hiring policy, { which is also a key motivation of our approach.} For instance, E. G. Anderson found the optimal policy  by searching the best ratio between apprentices and experienced employees, in a growth context, with a model based on experience and productivity which suggests to strike the happy medium between too many apprentices (that have to be trained by older employees) and too many experienced employees (that are more expensive in the company's point of view) \cite{anderson2001managing}. Other studies also proposed to optimize the required number of staff with a stochastic model \cite{bartholomew1977maintaining}.

{ Rare studies use partial differential equations (PDE) in the framework of population dynamics, see e.g. \cite{gerchak1990manpower}. They seem to be very marginal in the field untill now. Hence, our study aims at providing the first building blocks to a comprehensive approach using these so-called \emph{structured population equations}.}

{ Though only rarely applied to SWP, deterministic}  population dynamics has been an extensive research topic, which fields of application are very broad, especially in biology, where partial differential equations (PDE) are frequently used to model real life processes in ecology, immunology, epidemiology (\cite{edelstein1988mathematical,perthame2006transport, thieme2003mathematics})... A subject that started with Malthusian considerations has now evolved into advanced multidimensional and nonlinear frameworks. Among structured population models, the age-structured, also called "renewal" or McKendrick-Von Foerster equation \cite{kermack1927contribution, keyfitz1997mckendrick}, is one of the most widely used and studied equation, under linear or nonlinear forms, and with variants  used in many fields, from the neuroscience to cancer modeling. 

\paragraph{Goals and motivations.}

Companies' Financial Information Systems (IS) and/or Human Resources Information Systems (HRIS) collect both labor costs and demographic data as part of their standard processes. In section \ref{section2}, the proposal developed in this paper revolves around creating an actionable quantitative framework based upon those data. This enables a workforce evolution forecast and provides a better understanding of the dynamics at stake to manage a company workforce. In section \ref{section3}, the explanatory power of this framework is stressed by its results on standard workforce management policies. It is shown that moving from a workforce management by operating expenses toward an optimization of the overall workforce experience is economically sound. Empirical evidence is provided.
 \\
  \\
\hspace*{0.2cm}This article is organized as follows. In section \ref{section2}, we build a preliminary framework with which we determine the workforce evolution and convergence towards a stable age structure. We show that there can be many short term headcount fluctuations, and studying the long term behavior may not be appropriate, due to an exceedingly long time scale. We therefore build another framework in section \ref{section3} for which the hire rate structure is driven by an economic constraint: the labor cost. We first determine the workforce evolution,{ we then optimize the company's expenses with maintained experience, which leads us to an optimal demographic structure and an associated hiring policy.}

\section{{Analyze} workforce evolution in a demographic framework}
\label{section2}
 
{SWP is usually a long term analysis. Hence, assessing the stability of a company workforce is of key importance. One's workforce usually evolves according to its demographics characteristics (age, tenure, gender ....). Two main movements rules this evolution: attrition and hiring. Attrition accounts for workers leaving the company. Hiring is endogenous (depending on firm activity) while attrition is exogenous. Attrition is driven by three factors: market labor demand, company termination policies and retirement. In this specific case, company induced terminations are not allowed and employees only leave the company according to their own wish. Retirement is taking into account by introducing a retirement age $z_{\rm max}$ after which the worker leaves the active workforce. }

{ To model this evolution, we consider the population of workers of age $z$ at time $t$. We denote it $\rho(t,z),$ with $z\in [z_{\rm min},z_{\rm max}]$, $z_{\rm min}$ being the youngest hiring age and $z_{\rm max}$ the retirement age. 
In this first study, we also assume that attrition
 is purely exogeneous with a rate depending only on the \emph{age} of the workers, and that the hiring policy  determines a hiring age distribution, denoted by $\gamma (z),$ and a certain hiring rate depending only on the total population $P_t=\int_{z_{\rm min}}^{z_{\rm max}} \rho(t,z)dz .$ One can notice that such assumptions are quite strong: in many cases, hiring should depend also on other so-called \emph{structuring} variables, and not only age - for instance, experience, skills, gender, etc. Similarly, attrition could depend on the same kind of factors. However, our simplifying assumptions allow us to build a self-consistant  example, already able to give useful insights in the evolution of the population.
We thus write the following age-structured equation satisfied by $\rho(t,z)$:}
 $$
 \overbrace{\frac{\partial\rho}{\partial t}(t,z)+\frac{\partial\rho}{\partial z}(t,z)}^{Workforce \ evolution}=-\overbrace{\mu (z)\rho(t,z)}^{Attrition}+\overbrace{h(P_t)P_t\gamma (z)}^{Hiring},\qquad z_{\rm min}<z<z_{\rm max},
 $$
 where $\mu (z)$ is the attrition rate, and  $h(P_t) P_t \gamma (z)$ is the  population hired at size $z$. We assume that $\mu$ and $\gamma$ are independent of time because the current framework is built for businesses with long product and research cycles (typically 5 to 10 years), which translates into a relatively stable global labor competition and experience needs. The coefficient $h(P_t)P_t$ represents the hiring rate for the population in scope. 
{ We choose to write this rate as the product of $P_t$ modulated by a function $h(P_t)$ because in a "reasonable" population range, it is natural to build a model where the number of hired employees is proportional to the total population. This would correspond to $h$ independent of $P_t$. However, if $h$ is constant for all $P_t,$ the model  becomes linear, leading to an exponential growth or decay of the population, { see e.g.~\cite{perthame2006transport}}. Hence, we need to choose a saturation $h$ such that the population cannot explode - see below.}
 
\subsection{Identifying the hiring rate structure}
  
Consequently, we consider here that the hiring profile $\gamma$ has been defined and propose another hiring rate based only on the total headcount $P_t$. We study its ability to stabilize the workforce population towards an age profile $P_{\rm eq}$. {As mentioned in E. Gurtin work (\cite{gurtin1974non}), for the Malthusian law, the birth and death moduli are independent of the population, which does not reflect a realistic workforce behavior}. Using a standard formulation in population evolution, we choose the saturation rate under the form $h(P_t)= \frac{1}{1+\alpha P_t^2}$. Therefore, the temporal evolution of the headcount density is driven by this equation:
\begin{equation}
\left\{
\begin{array}{l}
 \frac{\partial\rho}{\partial t}(t,z)+\frac{\partial\rho}{\partial z}(t,z)=-\mu (z)\rho(t,z)+\frac{P_t}{1+\alpha P_t^2}\gamma (z), \qquad z_{\rm min}<z<z_{\rm max},\\
  \rho(t,z_{\rm min})=0,\\
  \rho(0,z)=\rho^0(z)\geq 0.
\end{array}
\right.
\label{eq1}
\end{equation}
 \hspace*{0.2cm}The parameter $\alpha$ is a pressure population constant representing the budget constraint ($\alpha > 0$). Indeed, as the parameter $\alpha$ is positive, the hiring rate increases with the population for small populations, and decreases from a certain population threshold. So workforce cannot grow exponentially, which reflects the fact that companies cannot hire indefinitely. { Notice however that the choice of the shape of $h(P_t)$ is somewhat arbitrary: similar reasoning could be done with any decreasing function, departing from a sufficiently high value at $0$ and vanishing at infinity.} 
 \\ 
 \hspace*{0.2cm}The hiring age distribution $\gamma (z)$ is set to its historical value. Under this formalism, stability can be reached. The convergence (see appendix \ref{appen1}) is achieved exponentially fast. 
In order to ensure a non null steady state, we show that the following condition is required:
\begin{equation}\label{toto}
\beta:=\int_{z_{\rm min}}^{z_{\rm max}}{\left( \int_{z_{\rm min}}^{z}{\gamma(y)e^{-\left(M(z)-M(y) \right) }dy} \right)dz} > 1,
\end{equation} 
where $M$ is an antiderivative of $\mu$. This may be interpreted as the fact that the hiring rate must be sufficiently high to counterbalance those leaving the firm.

      \subsection{How to action the framework}\label{numex1}
      
    In the case of a non null equilibrium, the hiring rate structure and the steady state $P_{\rm eq}$ of the workforce are closely connected. Indeed, {considering the equilibrium equation,} we show in the appendix \eqref{ap} that: 
      $$\alpha=\frac{\int_{z_{\rm min}}^{z_{\rm max}}{\left(\int_{z_{\rm min}}^{z}{\gamma(y)e^{-(M(z)-M(y))}dy}\right)dz}-1}{P_{\rm eq}^2},$$which leads us to the condition \eqref{toto} as $\alpha > 0$. 
    \\
\hspace*{0.2cm}{Consider the case of a company who is interested in building a stable workforce under unlimited contract of size $P_0$ while adjusting for labor demand through temporay contract. It can be assumed that its overall workforce is not likely to change over the long term ($P_0=P_{\rm eq}$). The company decision could, for instance, be motivated by long training times required to develop expertise in the workforce (ex: research in medical fields). According to the previous formalism, the hiring rate is hence fixed. In the next subsection, we analyze the short term workforce evolution according to the current workforce demographic structure.}
 
    \paragraph{Examples: necessity to adjust workforce management practices to reach stability.}
    
We choose to display the workforce analysis for two cases. For both examples, we show the initial workforce structure, the attrition and the hired population distribution, and we then display the associated workforce evolution. We assume $P_{\rm eq}=P_0=1000$ for both cases. The first example is taken in a fictional business unit A (BU A). In this example, the turnover rate is very low, and employees usually wait until retirement to leave the firm. The second example is taken in another fictional business unit B (BU B). In this example, employees are mainly young, and tend to leave the firm quickly. This is typically the case for sectors in which there are specific labor policies revolving around fixed term contracts and extreme labor demand. The numerical method is described in the appendix \eqref{appen}.
  \\
    \\
       \includegraphics[scale=0.21]{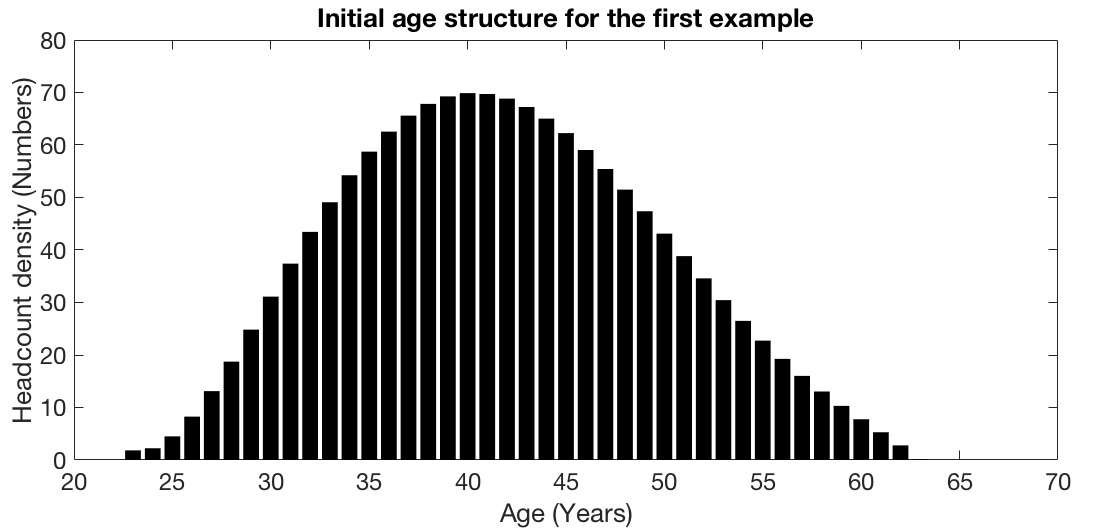}
\includegraphics[scale=0.21]{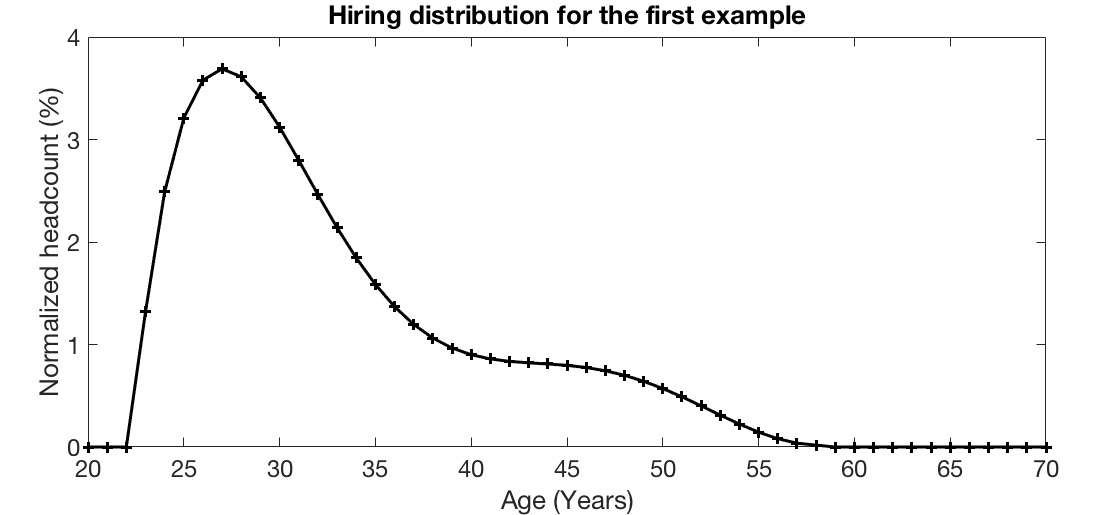}
  \\
\hspace*{4cm} \includegraphics[scale=0.23]{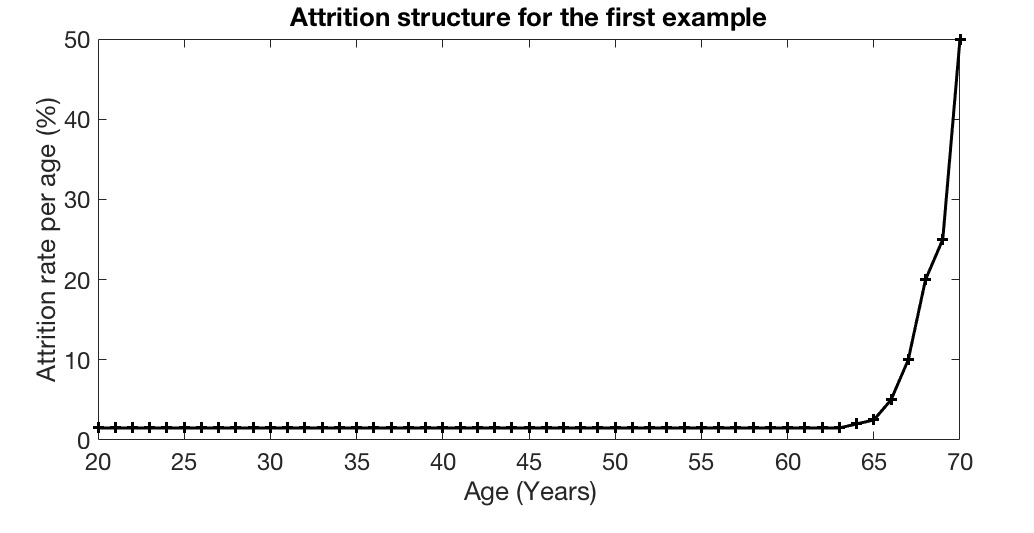}
  \\
    \\[-8mm]
\begin{scriptsize}
\hspace*{0.2cm}{FIG. 1. \textit{Initial age structure, historical hired population distribution (normalized), and historical attrition rate (for $z_{\rm min}=20 $ years and $z_{\rm max}=70 $ years) for the BU A.   \\ \\}}  
\end{scriptsize}
\includegraphics[scale=0.22]{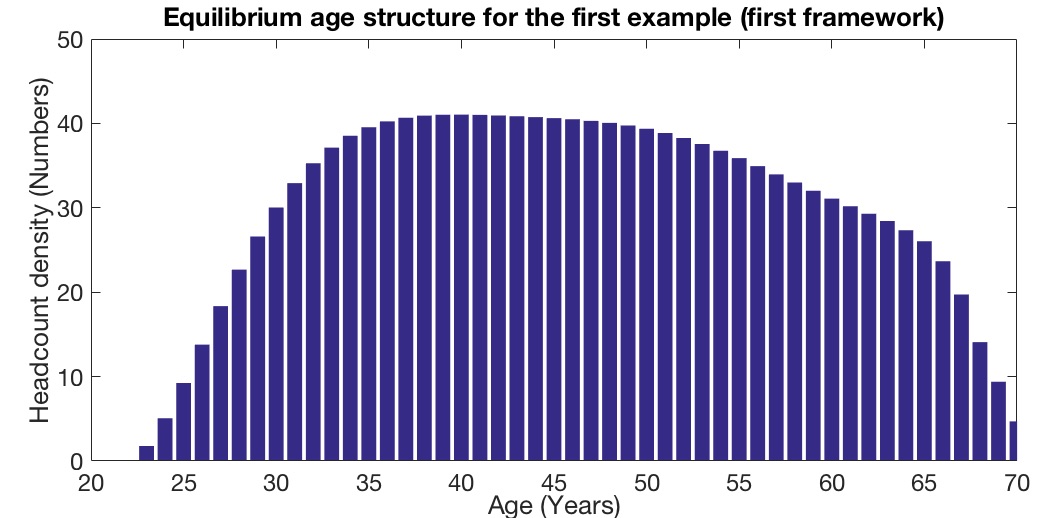}
 \includegraphics[scale=0.225]{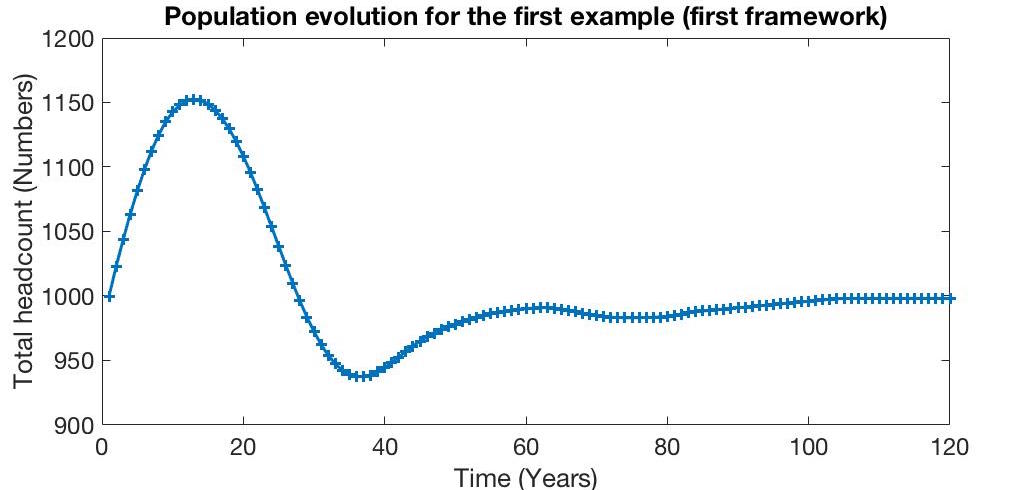}
   \\
   \\[-6mm]
\begin{scriptsize}
\hspace*{0.2cm} {FIG. 2. \textit{Equilibrium age structure and headcount temporal evolution for the BU A, for the discretization $\delta t=\delta z=1 $ year, and for $P_{\rm eq}=P_0=1000$.}}
\end{scriptsize}
  \\   \\
\includegraphics[scale=0.21]{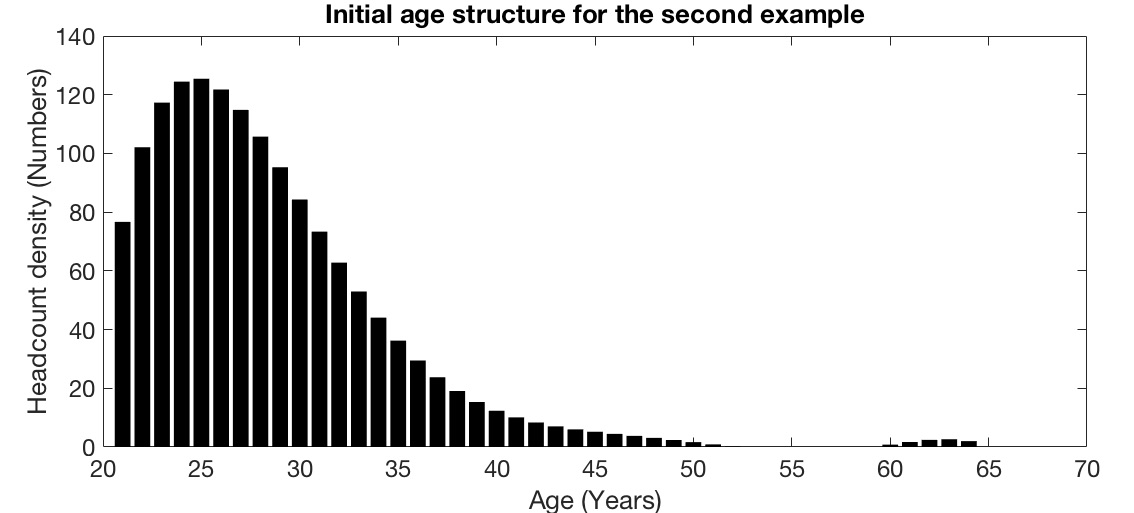}
\includegraphics[scale=0.21]{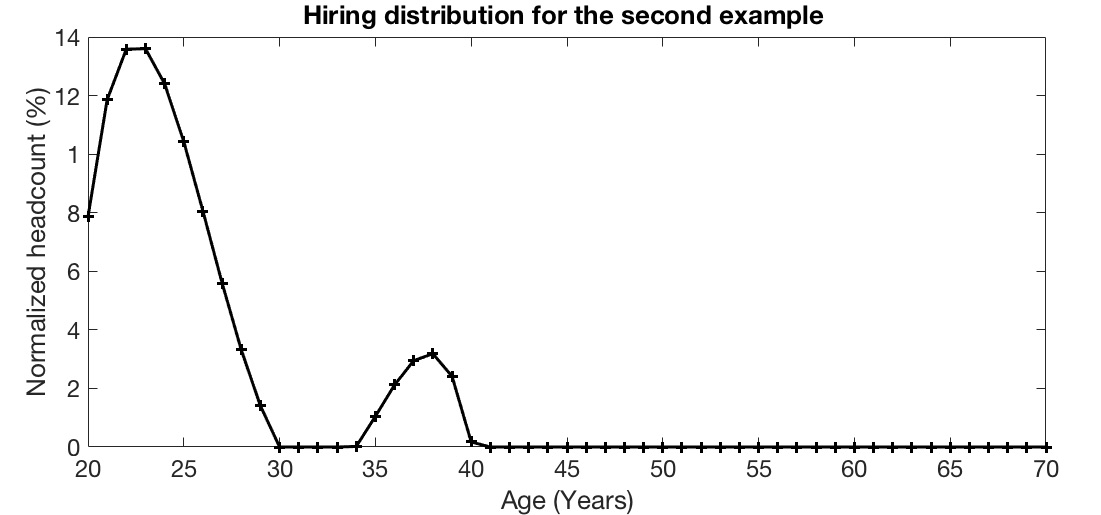}
  \\
\hspace*{4cm}\includegraphics[scale=0.23]{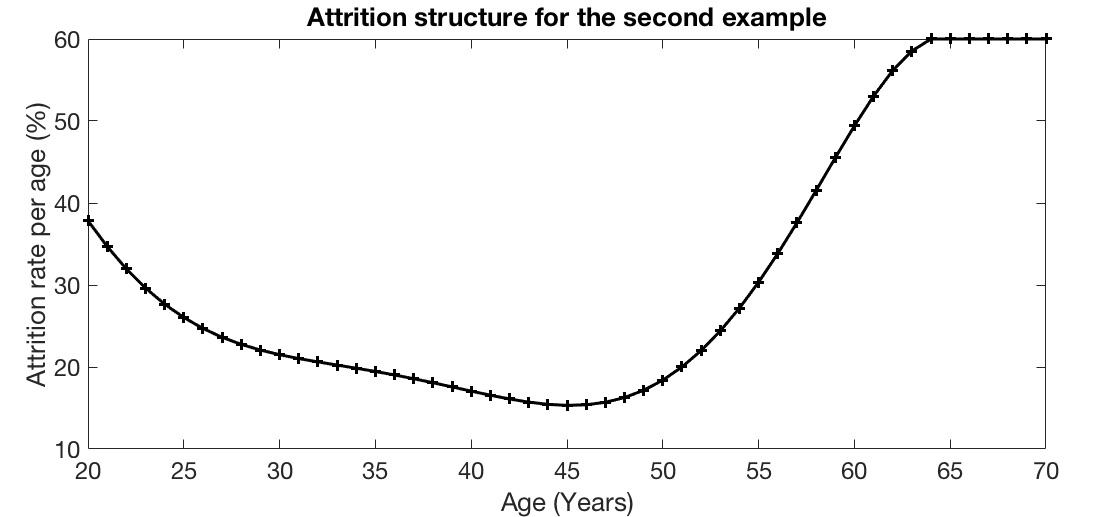}
 \\
   \\[-6mm]
\begin{scriptsize}
\hspace*{0.2cm} {FIG. 3. \textit{Initial age structure, historical hired population distribution (normalized), and historical attrition rate (for $z_{\rm min}=20 $ years and $z_{\rm max}=70 $ years) for the BU B.}}
\end{scriptsize}
 \\
  \\
\hspace*{0.2cm}\includegraphics[scale=0.22]{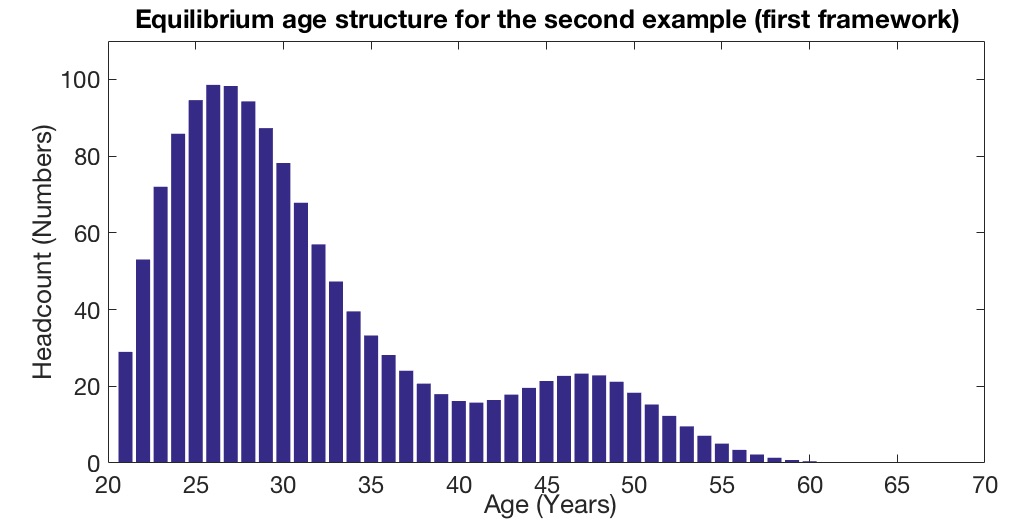}
\includegraphics[scale=0.22]{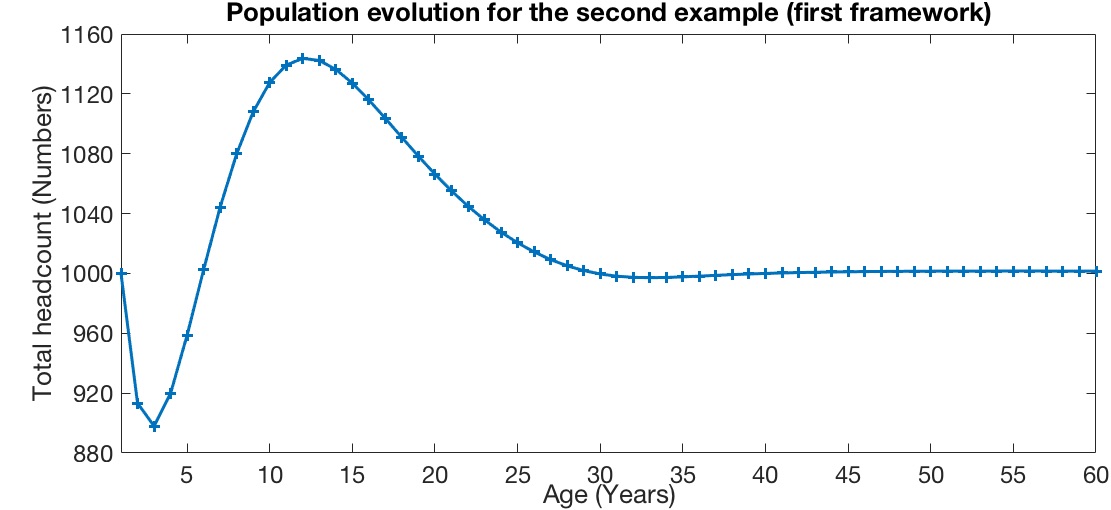}
 \\
    \\[-6mm]
\begin{scriptsize}
\hspace*{0.2cm} {FIG. 4. \textit{Equilibrium age structure and headcount temporal evolution for the BU B, for the discretization $\delta t=\delta z=1 $ year, and for $P_{\rm eq}=P_0=1000$. \\  \\}}
\end{scriptsize}
For the BU A, we can see that the initial average age is approximately 45 years. Furthermore, employees are mostly hired when they are young, and the maximum attrition rate is at retirement (Figure 1). The final average age of the employees is also 45 years, so the overall population did not age. This is due to the high hiring rate for young employees and the very low attrition rate for all employees until retirement. This also results in a flattening of the age structure. Plus, we note that the equilibrium is reached within approximately 80 years, and there are substantial headcount fluctuations in-between (Figure 2). 
\\  
\hspace*{0.2cm}For the BU B, we can see that the initial average is approximately 27 years. Furthermore, employees are mostly hired when they are young, and the maximum attrition rate is both for the youngest (fixed term contracts) and oldest (retirement) employees (Figure 3). The final average age of the employees is approximately 30 years, 3 years older than the initial average age, which is due to the hiring profile and the attrition rate: young and old employees tend to leave {early} the company, whereas average-aged employees stay (and age) in the company. Plus, we note that the equilibrium is reached within 30 years, and there are substantial fluctuations in-between (Figure 4).
 \\   
\hspace*{0.2cm}As a whole, we find that the equilibrium state is reached very slowly (80 and 30 years), and the fluctuations that we first thought to be short term may not be as short as expected. Indeed, fluctuations can extend up to 60 years, which is higher than an employee's lifetime in the company. 
  \\   
\hspace*{0.2cm}Although determining the steady state seems conceptually appealing, it may not be a relevant option, since the equilibrium will not be reached in a company's activity time scale range. In the next section, we review and modify the hiring rate structure, according to a reasonable economic constraint. The functional $a(t)=\frac{P_t}{1+\alpha P_t^2}$ has been designed empirically to answer good qualitative properties to the solution, the parameter $\alpha$ being determined by the target equilibrium $P_{\rm eq}$, which happens to be achieved too late to be sound. Plus, each employee does not necessarily have the same impact on the hiring policy of the firm, and this first hiring rate structure does not translate this idea.

\section {Design of economically sustainable management policies}
\label{section3}
      
    {We now consider another hiring policy based on budget considerations. We assume that employees have a certain cost depending on their age. In the first subsection, the workforce evolution will be analyzed with a total budget constraint. In the second subsection, an ideal hiring policy will be investigated in order to minimize the cost while keeping a fixed total experience.}

\subsection{Management policy 1: operational expenditure (opex) adjustments}\label{sub31}
        
      {As a first step, we assume that the total annual budget (which is assimilated to the sum of the annual salaries) remains constant at all times. This drives the hiring policy through the modulation of the hiring rate.}
        This translates into the following age-structured representation:
        \begin{equation}\label{eq3}
        \left\{
\begin{array}{l}
  \frac{\partial\rho}{\partial t}(t,z)+\frac{\partial\rho}{\partial z}(t,z)=-\mu (z)\rho(t,z)+h([\rho])\gamma (z), \qquad z_{\rm min}<z<z_{\rm max},  \\
  \rho(t,z_{\rm min})=0,\\
  \rho(0,z)=\rho^0(z)\geq 0, 
\end{array}
\right.
        \end{equation}
       where $h([\rho])$ depends on the labor cost contraint and does not depend on the age $z$. 
         \paragraph{Quantitative framework.}\label{numex2}
         
 To find the hiring rate structure $h([\rho])$, we assume here that the hiring profile $\gamma$ is given and that the total budget $\int_{z_{\rm min}}^{z_{\rm max}}{\rho (z,t)\omega (z)dz}$ {is} not time-dependent. The  the cost per employee $w(z)$ of age $z$ is given as well. By definition, this makes the equation conservative. {Indeed,} we have 
 $$
 \omega (z)\frac{\partial\rho}{\partial t}(t,z)+\omega(z)\frac{\partial\rho}{\partial z}(t,z)=-\omega(z)\mu (z)\rho(t,z)+\omega(z)h([\rho])\gamma (z),  
 $$
  and 
  $$
  \int_{z_{\rm min}}^{z_{\rm max}}{\omega (z)\frac{\partial\rho}{\partial t}(t,z)dz}=\int_{z_{\rm min}}^{z_{\rm max}}{\frac{\partial\rho\omega}{\partial t}(t,z)dz}=0 ,
  $$
   so  
   $$\underbrace{\int_{z_{\rm min}}^{z_{\rm max}}{\omega(z)\frac{\partial\rho}{\partial z}(t,z)dz}}_{\omega (z_{\rm max})\rho (t,z_{\rm max})-\int_{z_{\rm min}}^{z_{\rm max}}{\rho(z)\frac{\partial\omega}{\partial z}(t,z)dz}}=-\int_{z_{\rm min}}^{z_{\rm max}}{\omega(z)\mu (z)\rho(t,z)dz}+\int_{z_{\rm min}}^{z_{\rm max}}{\omega(z)h([\rho])\gamma (z)dz}, 
   $$
    and thus we obtain the following formula for the hiring rate  
    $$
    h([\rho])=\frac{\overbrace{\int_{z_{\rm min}}^{z_{\rm max}}{\omega(z)\mu (z)\rho(t,z)dz}}^{Attrition}+\overbrace{\omega (z_{\rm max})\rho (t,z_{\rm max})}^{Retirement}-\overbrace{\int_{z_{\rm min}}^{z_{\rm max}}{\rho(t,z)\frac{\partial\omega}{\partial z}(z)dz}}^{Cost\, of\, aging}}{\int_{z_{\rm min}}^{z_{\rm max}}{\omega(z)\gamma (z)dz}}. 
    $$
     So $h$ {yields} a linear form, which is easy to interpret: 
         \begin{itemize}
          \item The first term represents the budget available {resulting} from employees attrition.
           \item The second term represents the budget available {because of retirement at} age $z_{\rm max}$. 
            \item The last term is the cost of aging, which tracks the drift in wages due to seniority and promotions.
          \end{itemize}
          Under this formalism, and with the assumption $ \mu \omega \geq \omega^{'} $ (which may be interpreted as a positive balance between the budget earned with the attrition and to the cost of aging), stability can be reached. The convergence is shown in the appendix \eqref{ap2}.
          
                 \paragraph{Examples.}
                 
      Now, we can analyze the workforce evolution for this framework, with the same two examples of the BUs A and B. The historical values (initial age structure, attrition rate and hiring distribution) are the same as before. The numerical method is described in the appendix \eqref{ap3}.
   \\
    \\
      \hspace*{2.5cm} \includegraphics[scale=0.3]{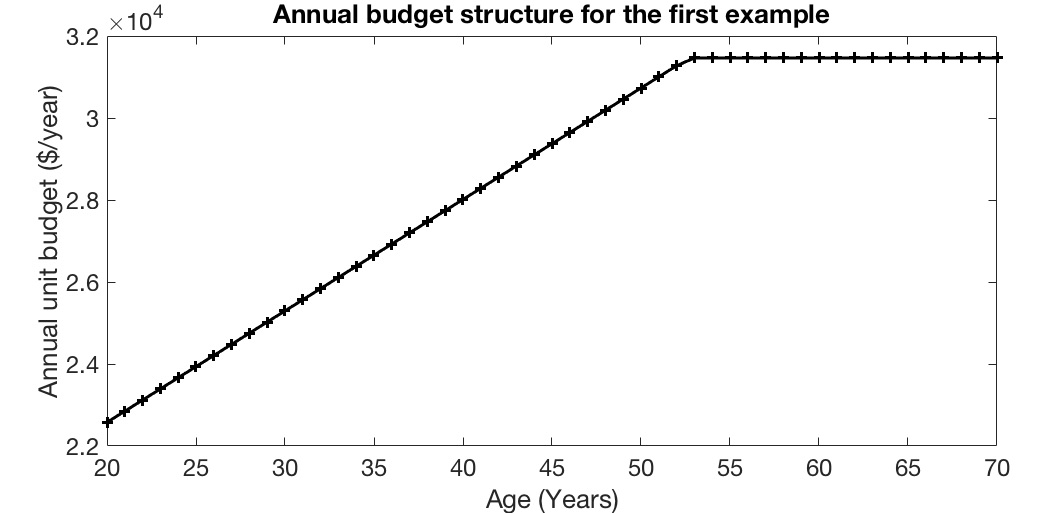}
\begin{scriptsize}
 \\
    \\[-6mm]
  \hspace*{0.2cm} {FIG. 5. \textit{Budget structure $\omega(z)$ of the employees of the BU A. \\ }}
\end{scriptsize}
\includegraphics[scale=0.22]{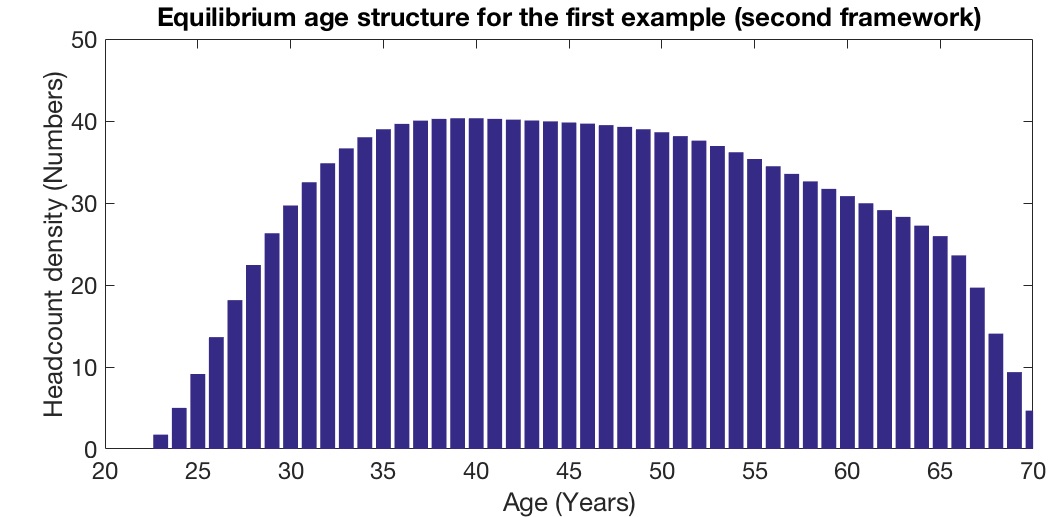}
\includegraphics[scale=0.24]{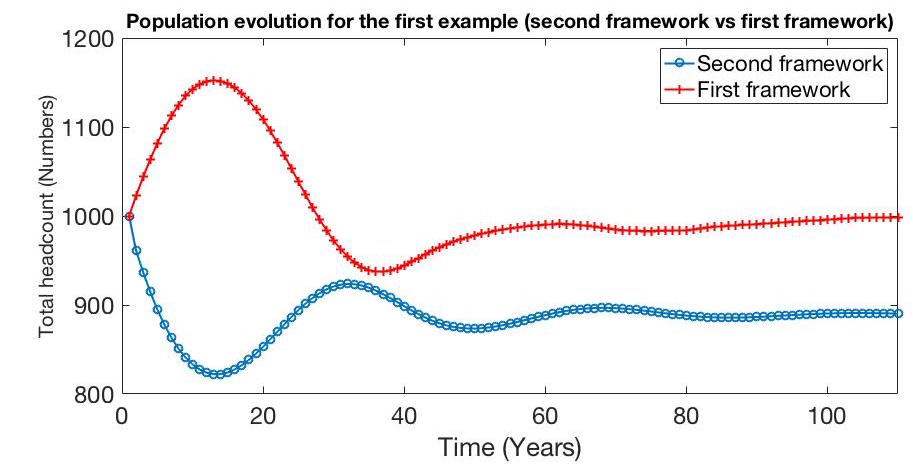}
   \\
      \\[-6mm]
\begin{scriptsize}
\hspace*{0.2cm} { FIG. 6. \textit{Equilibrium age structure and headcount temporal evolution for the BU A, for the discretization $\delta t=0.5$ year and $\delta z=1 $ year.  \\   \\ }}
\end{scriptsize}
   \hspace*{3cm}     \includegraphics[scale=0.3]{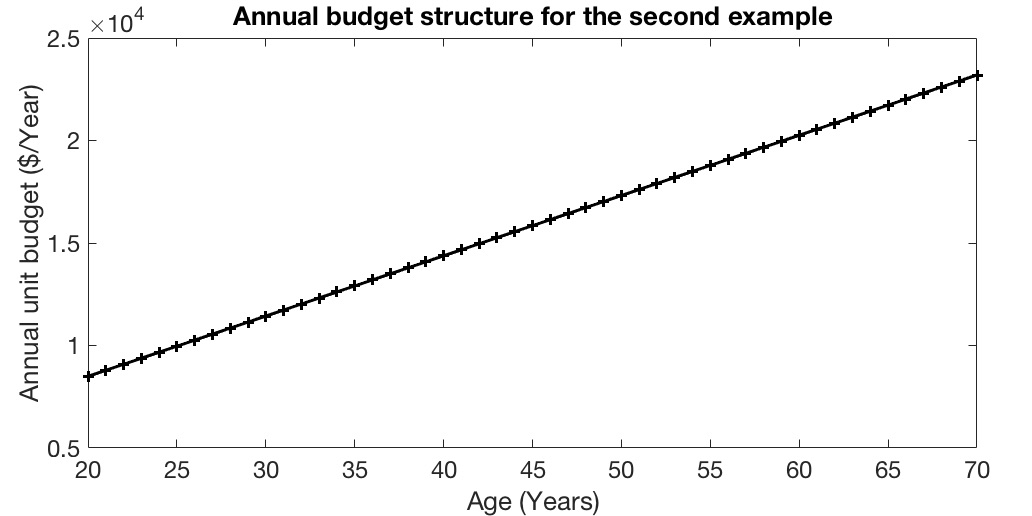}
\begin{scriptsize}
 \\ 
    \\[-6mm]
 \hspace*{0.4cm} {FIG. 7. \textit{Budget structure $\omega(z)$ of the BU B. }}
\end{scriptsize}
 \\
  \\
 \hspace*{0.2cm}\includegraphics[scale=0.215]{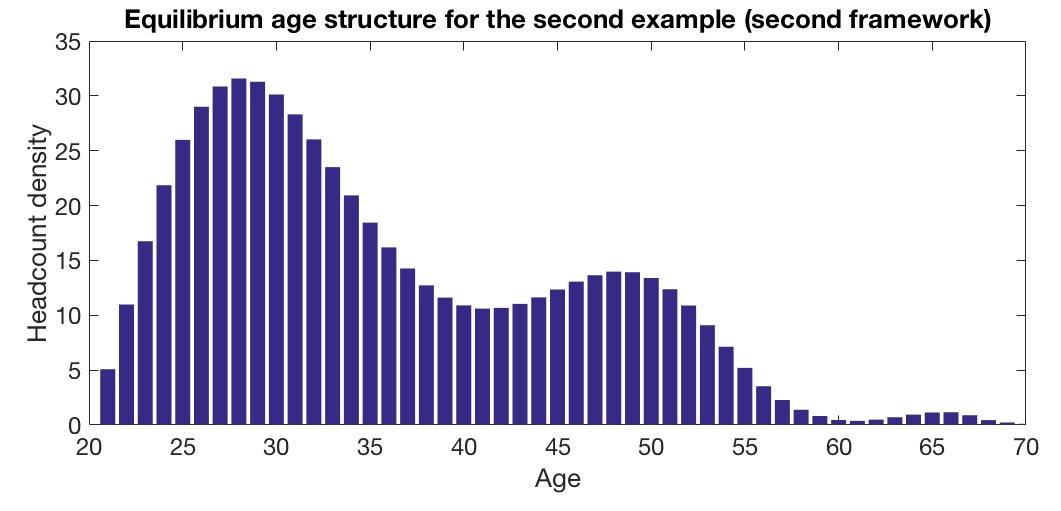}
\includegraphics[scale=0.22]{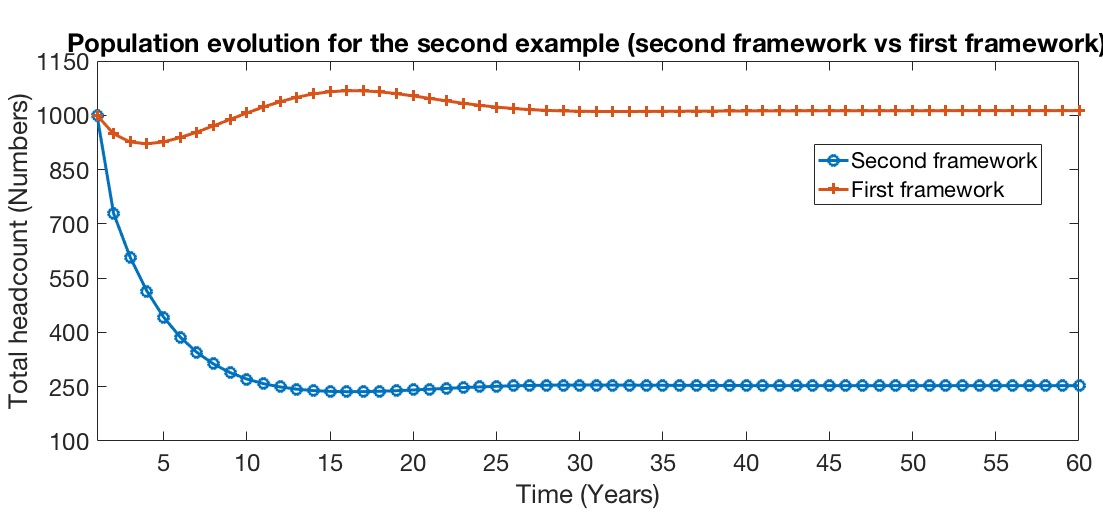}
 \\
  \\[-6mm]
\begin{scriptsize}
\hspace*{0.2cm} { FIG. 8. \textit{Equilibrium age structure and headcount temporal evolution for the BU B, for the discretization $\delta t=0.5$ year and $\delta z=1 $ year.   \\  \\}}
\end{scriptsize}
The budget structure of the employees of BU A (Figure 5) is linear and increases with age, until a certain age (approximately 55). {It then stabilizes} reflecting the fact that the maximum experience for this type of employee is reached at approximately 55 years. We can see that the final age structure is very similar to the one of the first framework (Figures 2 and 6). {Note that, as in the previous framework, the hiring policy favoring young employees and the very low attrition rate} result in a flattening of the age structure. However, the final headcount is 10\% lower (approximately 900 instead of 1000) {than its current baseline}. The equilibrium is reached within 90 years (Figure 6). 
\\
\hspace*{0.2cm} On the other hand, the budget structure of the employees of BU B is fully linear (Figure 7). We can see that the final age structure is very similar to the first framework (Figures 4 and 8). Just as in the previous framework, young and old employees tend to leave quickly the company, whereas average-aged employees stay in the company. However, the equilibrium headcount is 75\% lower (250 instead of 1000), which is due to the flat total budget constraint while having an aging population. The equilibrium is reached within 30 years (Figure 8).
  \\
\hspace*{0.2cm} For both the first and second examples, the equilibrium age structures are very similar for the two frameworks. However, the equilibrium headcount is different (in these cases lower), because we did not fix $P_{\rm eq}=P_0$ for the second framework. Plus, {the evolution to} the stable state depends on the framework. By simulating several cases with diverse assumptions, we observe empirically that there seems to be less oscillations for the first one, and the time to reach the equilibrium state is similar for both frameworks.
 \\
\hspace*{0.2cm}Even though the two frameworks are similar (in terms of fluctuation and stability), the second one may be more adapted to the SWP analysis. Indeed, this framework makes more economical sense and {add a differentiation layer among employees beyond their loyalty to the company (illustrated through the attrition rate $\mu(a)$)}.

  \subsection{Management policy 2: invest in knowledge}\label{sub32}
  
  Until now, we have kept the hired population distribution constant equal to its historical values. Though this is convenient to analyze the natural workforce evolution, identifying the optimal hiring policies is of key importance regarding the business needs assessment of a given company. This is why the hired population distribution $\gamma (z)$ is not fixed anymore, and neither is the total budget.
  
\paragraph{Identification of the optimal hiring policy.}
   
    We now minimize the global labor cost with given total knowledge, and hence find an optimal age structure and an optimal hiring policy. We consider the case of knowledge workers, in fields for which specific knowledge is required (for instance: experts from the medical field). Knowledge is the sum of aggregated experience and is age dependent. In this case knowledge is assumed to be equal to age. 
  \\
  \hspace*{0.2cm} More precisely, our objective is to minimize the total labor cost defined as $C=\int_{z_{\rm min}}^{z_{\rm max}}{\rho^*(z)w(z)dz}$ where $w(z)$ still denotes the cost per employee of age $z$ and $\rho^*(z)$ the concentration of workers of age $z$; under the constraint that the total knowledge $E=\int_{z_{\rm min}}^{z_{\rm max}}{\rho^*(z)zdz}$ is given. This constraint makes sense considering the workers population global knowledge. Knowledge (in other words the experience) rather than hourly workload is a better proxy to describe business needs. 
  \\ \hspace*{0.2cm}Termination is still not allowed. Recalling that $M$ denotes an antiderivative of the attrition rate $\mu$, we show in the appendix \eqref{ap4} that the optimal workforce structure is defined by
$$
  \rho^*(z)=e^{-M(z)}b\mathbf{1}_{z\geq z_0}, \quad \gamma^*(z) =b\delta_{z_0}e^{-M(z)},
$$
   with 
$$
   b=\frac{E}{\int_{z_0}^{z_{\rm max}}{ze^{-M(z)}}dz}, \quad C=Ed(z_0)
$$
and the optimal hiring age $z_0$ is defined by 
$$
    d(z_0)=\min_z\left(d(z)\right), 
$$
where $d(z)=\frac{f(z)}{g(z)}$ can be interpreted as follows: 
\\
$\bullet$  The numerator $f(z)=\int_{z}^{z_{\rm max}}{w(y)e^{-M(y)}dy}$ represents the average  tenure cost of an employee in the firm
\\
$\bullet$ The denominator $g(z)=\int_{z}^{z_{\rm max}}{ye^{-M(y)}dy}$ represents the average knowledge the employee will {have acquired} if hired at age $z$ during its tenure within the firm.
\\

\hspace*{0.2cm} {Minimizing $d$ translates into hiring at an optimal age ($z_0$) and having employees develop their knowledge and expertise within the company. This aligns with standard good management practices \cite{bersin2013predictions}).Three different cases are possible though depending in the cost and the attrition structures:}
\\
$\bullet$ {Case 1: if the minimum is reached in $z_0 \in (z_{\rm min},z_{\rm max})$, then it is optimal to build internally employees' careers starting from the age $z_0$. The firm is here doing long term investments in knowledge workers.}
\\
$\bullet$ {Case 2: if the minimum is reached in $z_{\rm max}$, then it is optimal to hire a pool of experts of maximum experience. However, those experts have to be newly hired each year, and this framework does not take into account the recruiting time and cost as well as the losses induced by this type of disruption. Normally one would expect in such cases to have a minimum reached a little bit before $z_{\rm max}$. This would mean focusing on hiring experts and maximizing their lifecycle within the company. This situation would translate into a problem of succession management.}
\\
$\bullet$ {Case 3: if the minimum is reached in $z_{\rm min}$, then the firm counts on recruiting a high number of young employees, in order to train and keep them until retirement age. }

\paragraph{Examples.}
         
 {In order to provide better illustrations, we choose to apply the above mentioned framework to three different BUs (BU 1, BU 2, BU 3). Those BUs slightly differ from the ones mentioned in the previous sections. Note that the attrition rate is kept at $ \mu=30\%$.Each BU represents a possible real scenario. Note that for confidentiality reasons, no futher specific characteristic will be communicated on the BU set up or composition.}
  \\
  \\
\includegraphics[scale=0.235]{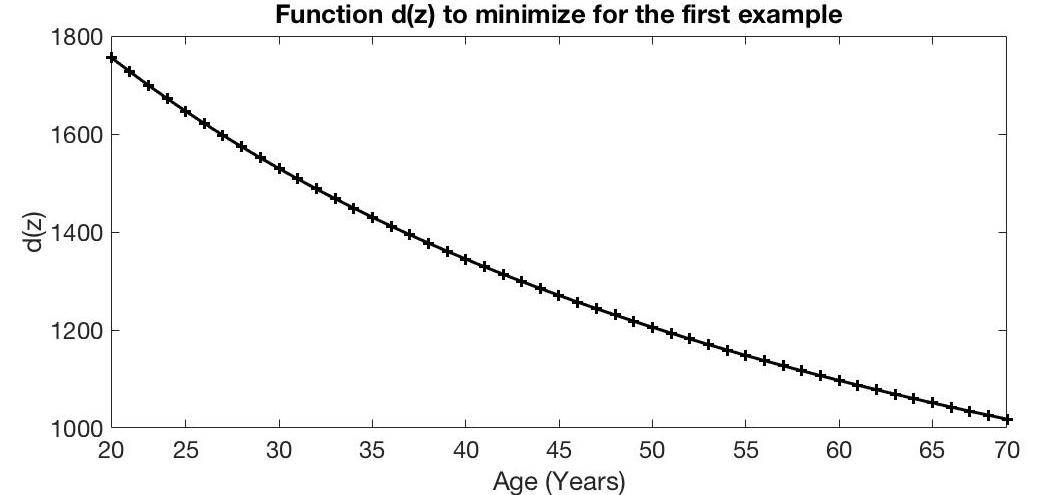}
\includegraphics[scale=0.225]{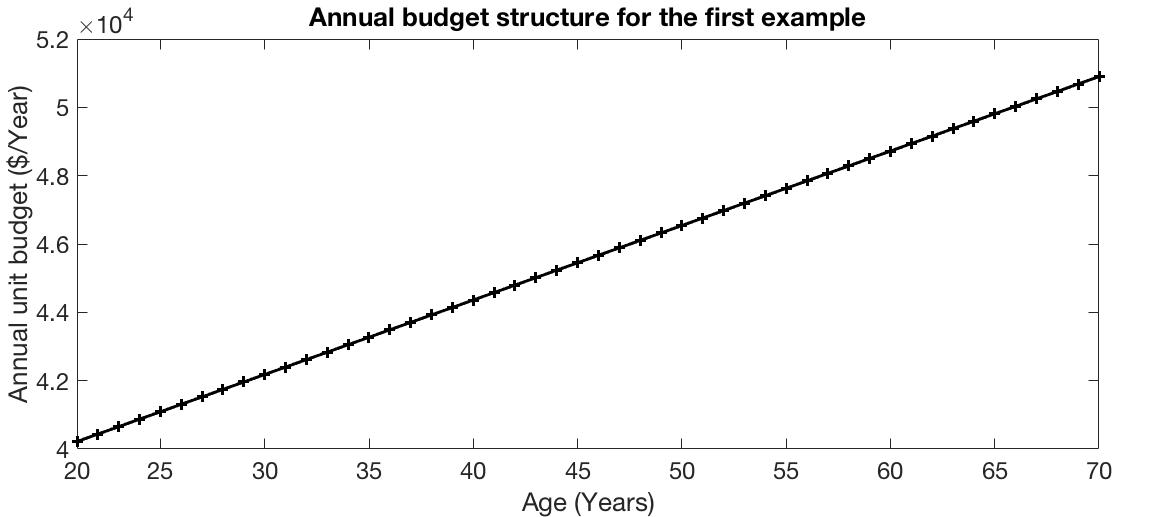}
  \\
  \\[-6mm]
\begin{scriptsize}
\hspace*{0.2cm} { FIG. 9. \textit{Function to minimize $ d(z)$ and budget structure $\omega(z)$ for the BU 1, for which E=3500 years, and, without optimization, average age is 35 years and corresponding labor cost is \$5 million/year (for a total headcount of 100).  The budget is linear, with positive coefficients. }}
\end{scriptsize}
\\
  \\
 \hspace*{3.5cm}\includegraphics[scale=0.25]{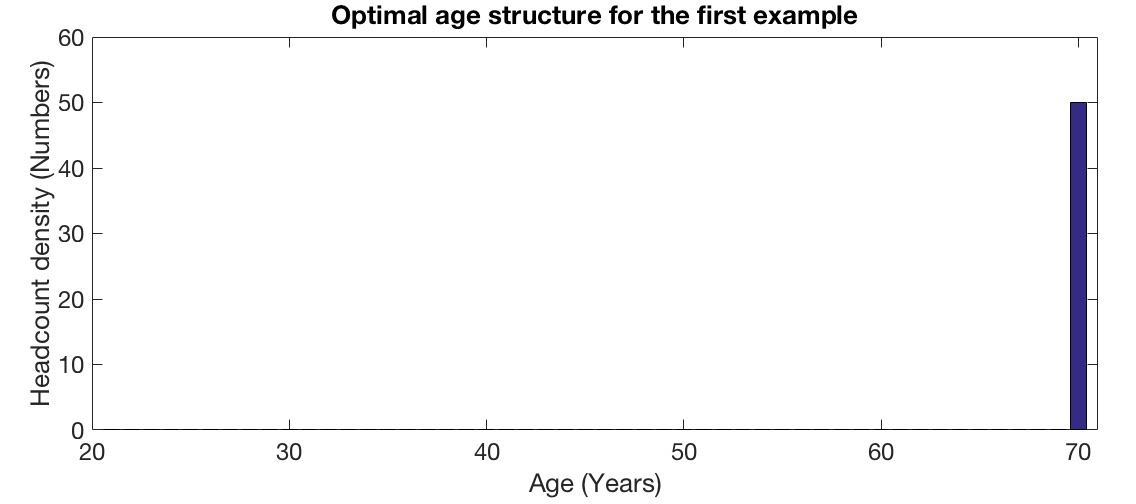}
\\
  \\[-6mm]
\begin{scriptsize}
\hspace*{0.2cm} {FIG. 10. \textit{Optimal age structure for the BU 3. \\ \\}}
\end{scriptsize}
\\
  \\
\includegraphics[scale=0.23]{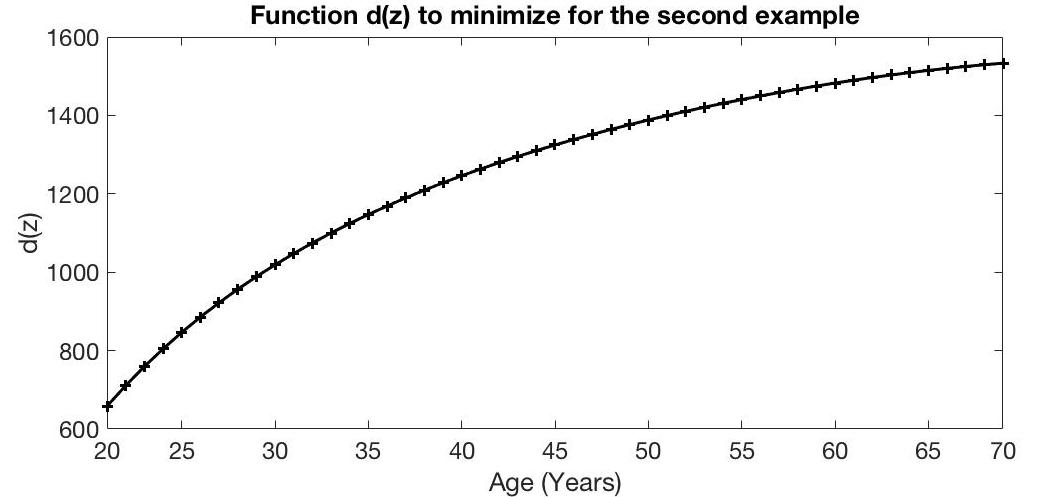}
\includegraphics[scale=0.23]{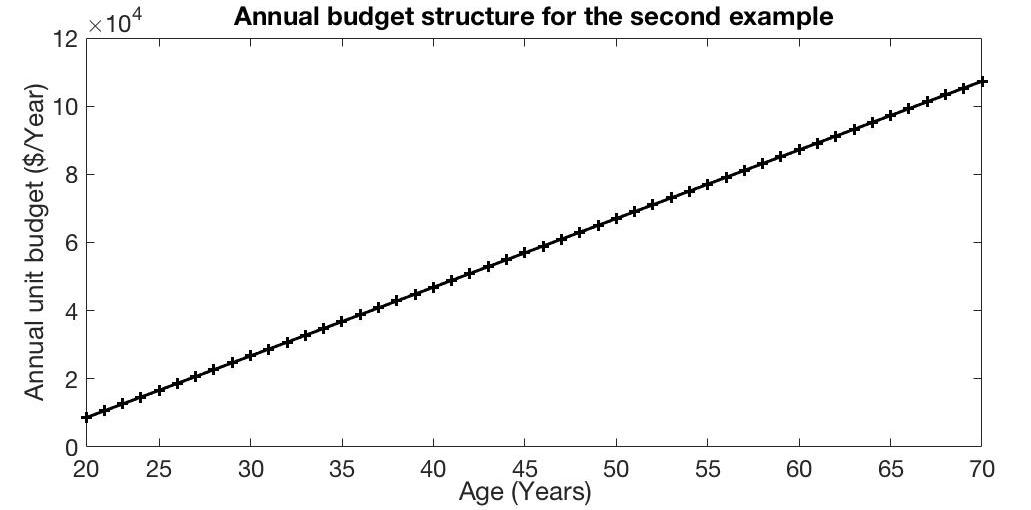}
 \\
   \\[-6mm]
\begin{scriptsize}
\hspace*{0.2cm} { FIG. 11. \textit{Function to minimize $ d(z)$ and budget structure $\omega(z)$ for the BU 2, for which E=3000, and, without optimization, average age is 30 and corresponding labor cost is 2 \$million/year (for a total headcount of 100).  The budget is linear.}}
\end{scriptsize}
 \\
  \\
 \hspace*{2cm}\includegraphics[scale=0.30]{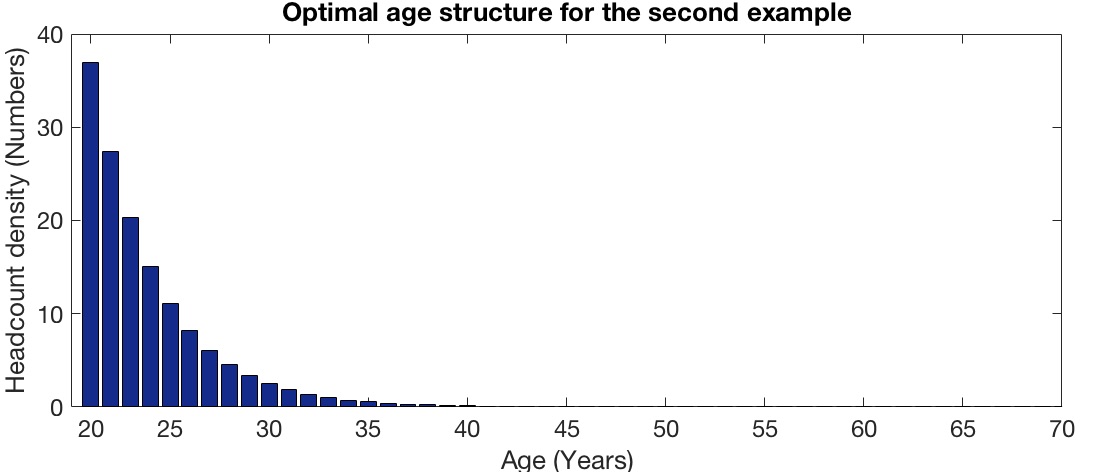}
 \\
  \\[-6mm]
\begin{scriptsize}
\hspace*{0.2cm} {FIG. 12. \textit{Optimal age structure for the BU 2.}}
\end{scriptsize}
 \\
  \\
\includegraphics[scale=0.23]{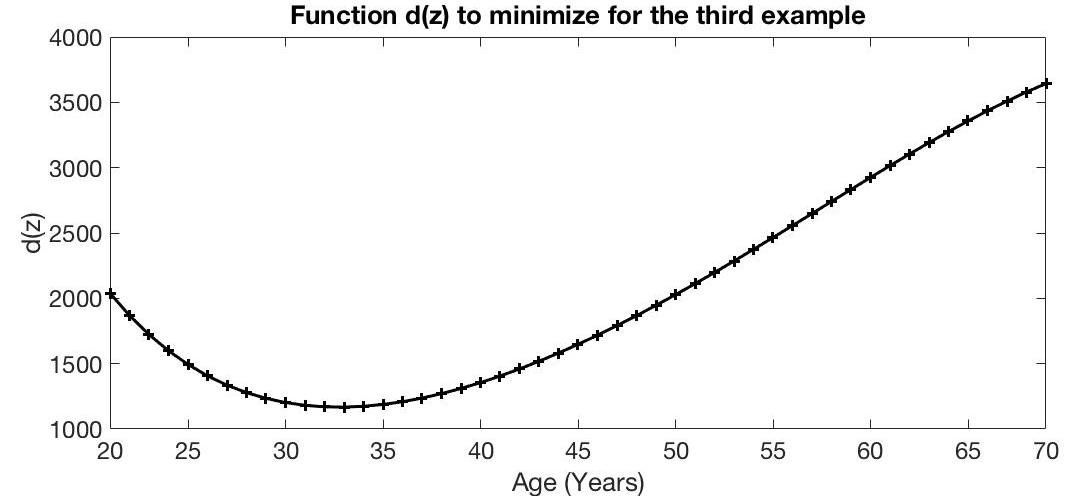}
\includegraphics[scale=0.23]{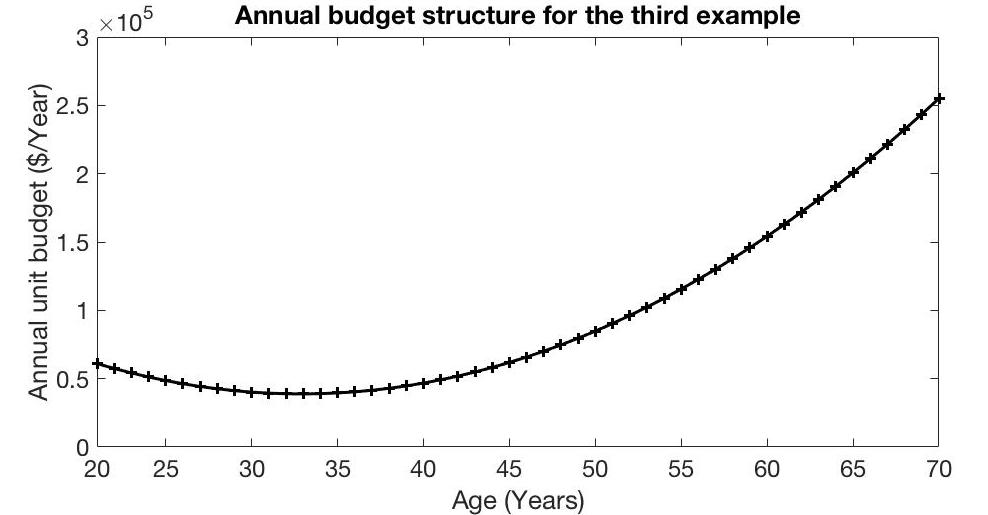}
 \\
   \\[-6mm]
\begin{scriptsize}
\hspace*{0.2cm} { FIG. 13. \textit{Function to minimize $ d(z)$ and budget structure $\omega(z)$ for the BU 3, for which E=3700, and, without optimization, average age is 37 and corresponding labor cost is \$6 million/year (for a total headcount of 100).  The unit budget is a polynomial of degree 2. }}
\end{scriptsize}
\\
  \\
 \hspace*{2cm}\includegraphics[scale=0.30]{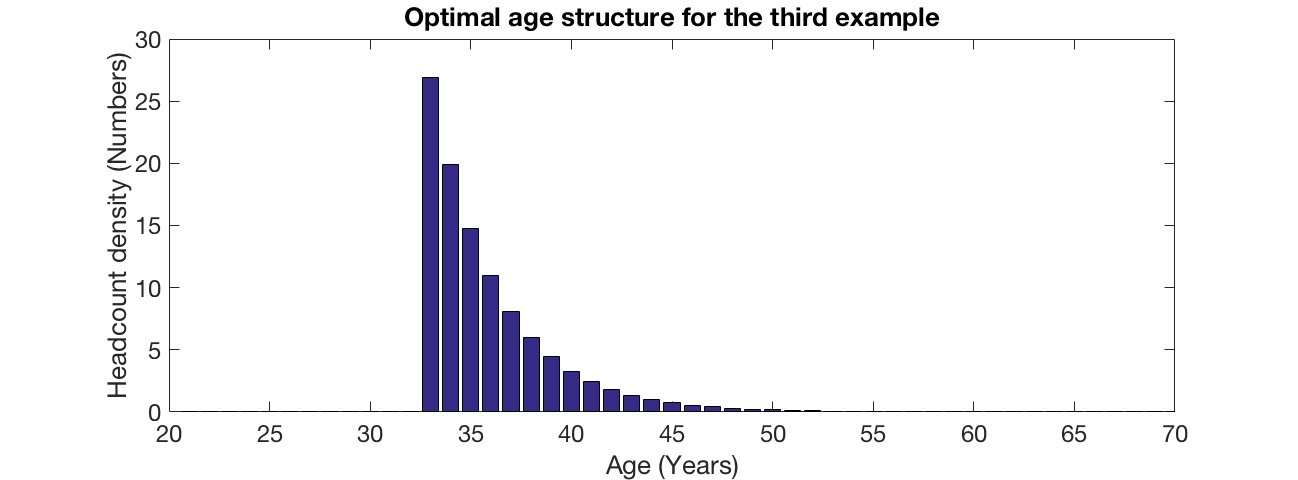}
\\
  \\[-6mm]
\begin{scriptsize}
\hspace*{0.2cm} {FIG. 14. \textit{Optimal age structure for the BU 3. \\ \\}}
\end{scriptsize}
For the BU 1, we can see that the minimum is {at the retirement age} (Figure 9), and we can deduce that the ideal age structure of Figure 10 is 50 people close to retirement (Case 2). This is a typical scenario, where many years of experience are usually required. The optimized labor cost is approximately \$3 million/year, which represents a \$2 million/year saving (approximately 40\% of the total labor cost). However, as we said before, this framework does not take into account the recruiting cost and time to fill, which is not realistic. A suboptimal solution or a framework review should therefore be in order.
\\
\hspace*{0.2cm} For the BU 2, we can see that the minimum is at the minimum age 20 (Figure 11), and we can deduce { the optimal hiring age and the ideal age structure of Figure 12 (Case 1). This happens when the salary gap} between the young and the old employees overtops the associated experience gap. The optimized labor cost is approximately \$1.8 million/year, which represents a \$0.2 million/year saving (approximately 10\% of the total labor cost). We can see that people are hired at 20 years and they progressively leave the company as they age. The average age is 25 (instead of 30 for the non optimized situation), and the total headcount is approximately 120 (instead of 100). 
 \\
\hspace*{0.2cm} For the BU 3, we can see that the minimum is at the age 33 (Figure 13), and we can deduce { the optimal hiring age and the ideal age structure of Figure 14 (Case 1). Here, the most experienced employees are expensive and represent a small proportion of the total workforce, whereas the young ones are less expensive and account for most of the workforce}.  The optimized labor cost is approximately \$5 million/year, which represents a \$1 million/year saving (around 15\% of the total labor cost). We can see that people are hired at 33 years and they progressively leave the company as they age. The average age is 37 (just as in the non optimized situation), and the total headcount is also approximately 100. 
 \\
\hspace*{0.2cm}
{This minimization provides generic solutions to workforce design challenges under experience and cost constraints. The three scenarios that arise from the study described above is in line with the idea of developing one's workforce over the long term,	and should provide ideas to better handle the workforce related business needs assessment of a company.}

\section{Conclusion { and discussion}}

The structured equations framework developed in this paper is a suitable first milestone to get preliminary answers to standard long term workforce concerns such as population stability or the mandatory adaptability of a company hiring policies. This framework can also be leveraged to provide generic solutions to workforce design challenges under experience and cost constraints. So far, we have studied two issues. Firstly, assuming the age profile is known, we have considered hiring strategies able to stabilize the employees population, either based on the total headcount or on a budget constraint. Secondly, we have studied the hiring profile in order to reach an optimal age profile at equilibrium under an experience constraint.
\\ 
\hspace*{0.2cm} { In comparison with discrete stochastic approaches, which represent the main research corpus in SWP, our deterministic continuous approach may appear as a useful complement. We gain here a model which is easy to simulate with fast algorithms, which allows for high flexibility to carry out sensitivity analysis, and which properties are demonstrated thanks to  many theoretical tools  from the field of structured equations. It is particularly well-designed for large populations, in which an averaging effect leads the model to be a good approximation of the equation satisfied in expectancy by a stochastic individual-based approach. However, even for small populations, our equations may be viewed as satisfied \emph{in expectancy} by the agents, thus giving interesting insights on the expected evolution of the population.}
\\
\hspace*{0.2cm} Several limitations to the current paper arise. As mentioned earlier, for small populations, random effects are better represented by stochastic models. For large populations, from a theoretical standpoint, our results and methods do not allow time variations in the attrition nor hired population distribution and the present framework  ignores the workforce transition from one job to another while staying within the same firm. Also, so far the model accounts for only two variables (age and time) and one population class, we will come back on this limitation by also taking into account the tenure in company \cite{SWP2}.
From a practical perspective, the main shortcoming of the study is the lack of productivity function that has been replaced by constraints on experience. Therefore this paper should be considered as a preliminary study case for SWP.
\\ 
 \hspace*{0.2cm} Therefore, a first natural next step would be to optimize the labor costs under population and experience constraints. This type of constraint would be suited to investigate cost-optimal demographic structure for non-knowledge workers. Their overall activity is first determined by workload constraint that is not demographic in nature ({for instance machine workload}) which leads to a population size requirement. Experience would still be important because it represents a knowledge process that cannot be acquired prior to a certain experience threshold. This type of multiple constraints minimization is an extensively researched topic called the linear programming problem. This domain has been pioneered in the 60s (\cite{tyndall1965duality}), and followed by many studies (\cite{pullan1993algorithm,reiland1980optimality}). As another next step, in a continuation of the present analysis, the notion of productivity and a study case on sales representatives could be investigated. Then, the framework could  be expanded to a multi-population framework in order to better represent layers within a company.

\appendix 
  
               
\section{{ Study of} the non-linear equation \eqref{eq1}} 
\label{appen1}
                
{We now perform a mathematical study of the qualitative behavior (stationary states, asymptotic convergence)  of Equation \eqref{eq1},  with respect to the  parameter $\beta$ defined in \eqref{toto} as
$$
  \beta=\int_{z_{\rm min}}^{z_{\rm max}} \left(\int_{z_{\rm min}}^{z}{\gamma(y)e^{-(M(z)-M(y))}dy}\right)dz 
$$
which is  essentially  positively correlated with  the mean coefficient of recruitment.
  We recall the equation~\eqref{eq1} under consideration
\begin{equation*}
\left\{
\begin{array}{l}
 \frac{\partial\rho}{\partial t}(t,z)+\frac{\partial\rho}{\partial z}(t,z)=-\mu (z)\rho(t,z)+\frac{P_t}{1+\alpha P_t^2}\gamma (z), \qquad z_{\rm min}<z<z_{\rm max}, \qquad \alpha >0,
 \\
  \rho(t,z_{\rm min})=0,\\
  \rho(0,z)=\rho^0(z)\geq 0,
\end{array}
\right.
\end{equation*}
 Here, to ensure existence and uniqueness of  a solution  $\rho \in C_b(\mathbb{R}_+,L^1(z_{\rm min},z_{\rm max}))$ of   Equation \eqref{eq1} (\cite{perthame2006transport}), we assume that   $\mu, \gamma$ and $ \rho^0$ are bounded on $[z_{\rm min},z_{\rm max}]$.
\\
We first show that for $\beta \leq 1$, the only  stationary state is zero and when $\beta>1$,  the recruitment is large enough to ensure a unique non vanishing stationary state (see Proposition~\ref{prop1}).  Then, we give a result of asymptotic convergence  (see Propositions~\ref{prop2} and \ref{prop3}). 
 }    
 
\subsection{Existence of steady states}\label{ap}

\begin{proposition}\label{prop1}
         When $\beta\leq1$, the only equilibrium of Equation \eqref{eq1} is zero. When $\beta>1$, the system  \eqref{eq1} admits two equilibrium states: zero and a positive one.
           \end{proposition}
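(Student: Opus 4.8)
The plan is to look for stationary states $\rho(z)$ of Equation~\eqref{eq1}, i.e. solutions of the ODE $\rho'(z) = -\mu(z)\rho(z) + \frac{P}{1+\alpha P^2}\gamma(z)$ with $\rho(z_{\rm min})=0$, where $P=\int_{z_{\rm min}}^{z_{\rm max}}\rho(z)\,dz$ is the (a priori unknown) total population at equilibrium. First I would integrate this linear first-order ODE explicitly using the integrating factor $e^{M(z)}$ (with $M'=\mu$), which gives
$$
\rho(z) = \frac{P}{1+\alpha P^2}\int_{z_{\rm min}}^{z}\gamma(y)\,e^{-(M(z)-M(y))}\,dy.
$$
Note $\rho\geq 0$ automatically since $\gamma\geq 0$. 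Then I would integrate this expression in $z$ over $[z_{\rm min},z_{\rm max}]$ and use the definition of $P$ together with the definition of $\beta$ in~\eqref{toto} to obtain the scalar consistency equation
$$
P = \frac{P}{1+\alpha P^2}\,\beta.
$$

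Next I would analyze this scalar equation. Clearly $P=0$ is always a solution, corresponding to the zero steady state. Any nonzero solution must satisfy $1+\alpha P^2 = \beta$, i.e. $P^2 = (\beta-1)/\alpha$. Since $\alpha>0$, this has a (unique) positive root $P_{\rm eq} = \sqrt{(\beta-1)/\alpha}$ precisely when $\beta>1$, and no positive root when $\beta\leq 1$. Substituting $P_{\rm eq}$ back into the integral formula for $\rho$ yields the unique nonzero stationary state when $\beta>1$, and shows it is nonnegative (indeed positive on $(z_{\rm min},z_{\rm max}]$ as soon as $\gamma$ is not identically zero near $z_{\rm min}$). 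This establishes both assertions: for $\beta\leq 1$ only zero, and for $\beta>1$ exactly the two equilibria.

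I do not expect a serious obstacle here; the argument is essentially a reduction of the stationary PDE to a one-dimensional fixed-point equation followed by an elementary case distinction on the sign of $\beta-1$. The only points requiring a little care are: (i) justifying that the integrating-factor computation is valid under the stated boundedness hypotheses on $\mu$ and $\gamma$ (so that $M$ is well-defined, Lipschitz, and all integrals converge), and (ii) checking the edge case $\beta=1$, where $1+\alpha P^2=\beta$ forces $P=0$, so no extra equilibrium appears — consistent with the statement. One should also remark that the nonzero equilibrium is genuinely distinct from zero since $P_{\rm eq}>0$ strictly when $\beta>1$. This is the content of the relation $\alpha = (\beta-1)/P_{\rm eq}^2$ already announced in Section~\ref{numex1}, now derived rigorously.
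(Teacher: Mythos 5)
Your proposal is correct and follows essentially the same route as the paper: integrate the stationary ODE with the integrating factor $e^{M(z)}$, integrate the resulting expression in $z$ to reduce to the scalar equation $P=\beta P/(1+\alpha P^{2})$, and conclude by the case distinction on $\beta-1$. No comments needed.
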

            \begin{proof}
     The stationary states, $\rho_{\rm eq}$, of Equation \eqref{eq1}  are solution of the equation 
   $$
\frac{\mathrm{d}\rho_{\rm eq}}{\mathrm{d}z}(z)=-\mu (z)\rho_{\rm eq}(z)+a^*\gamma (z),
$$
where    $a^*=\frac{P_{\rm eq}}{1+\alpha P_{\rm eq}^{2}}$ { and  $P_{eq}$ the total headcount given by} $P_{\rm eq}=\int \limits_{z_{\rm min}}^{z_{\rm max}}{\rho_{\rm eq}(z)dz}$. { As 
 $$\displaystyle \frac{\mathrm{d}}{\mathrm{d}z}\left(\rho_{\rm eq}(z)e^{M(z)}\right)=a^*\gamma (z)e^{M(z)},$$ we obtain 
 $$
 \rho_{\rm eq}(z)=\int_{z_{\rm min}}^{z}{a^*\gamma(y)e^{-(M(z)-M(y))}dy}=\frac{P_{\rm eq}}{1+\alpha P_{\rm eq}^{2}}\int_{z_{\rm min}}^{z}{\gamma(y)e^{-(M(z)-M(y))}dy}.
 $$
 Integrating the above equation, we find that $P_{\rm eq}$ must  satisfy the equation 
 \begin{equation}\label{eqPstatio} 
 P_{\rm eq}= \beta \frac{P_{\rm eq}}{1+\alpha P_{\rm eq}^{2}} \cdotp
 \end{equation}
 Now, either   $\beta \leq 1$, and the only possible solution of \eqref{eqPstatio} is $P_{\rm eq}=0$ either, $\beta>1$ and there are two solutions of the Equation \eqref{eqPstatio} given by 
 $$
 P_{\rm eq}=0 \quad  \hbox{ and } \quad P_{\rm eq}=  \sqrt{  \frac{ \beta -1}{ \alpha}},
 $$
 which ends the proof of Proposition~\ref{prop1}.} \hfill \qed 
    \end{proof}
    
\subsection{Asymptotic behavior}
   
\begin{proposition}[case $\beta < 1$] \label{prop2}
If $\beta < 1$, the total population $P(t)$ goes exponentially fast to 0 and 
\begin{equation}\label{lrho}
\lim\limits_{t\to\infty} \Vert \rho(t) \Vert_{L^\infty \big(z_{\rm min},\, z_{\rm max}\big)}= 0.
\end{equation}
 \end{proposition}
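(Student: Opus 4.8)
The obstacle is that integrating Equation~\eqref{eq1} in $z$ does \emph{not} produce a closed equation for $P_t$: the outgoing flux $\rho(t,z_{\rm max})$ and the weighted attrition integral $\int\mu\rho\,dz$ appear, and neither is controlled by $P_t$ alone. The remedy is to use the mild (Duhamel) formulation obtained by characteristics. Since $\rho(t,z_{\rm min})=0$, for $t\geq T_0:=z_{\rm max}-z_{\rm min}$ every characteristic reaching a point $(t,z)$ with $z\in[z_{\rm min},z_{\rm max}]$ has already crossed the boundary $z=z_{\rm min}$, so the contribution of the initial datum has disappeared and
$$
\rho(t,z)=\int_{z_{\rm min}}^{z}\frac{P_{t-(z-y)}}{1+\alpha P_{t-(z-y)}^{2}}\,\gamma(y)\,e^{-(M(z)-M(y))}\,dy,\qquad z_{\rm min}\leq z\leq z_{\rm max}.
$$
From the well-posedness theory recalled above, $\rho(t,\cdot)\geq0$, hence $P_t=\|\rho(t)\|_{L^1}\geq0$ and $\bar P:=\sup_{t\geq0}P_t<\infty$; moreover $\frac{P}{1+\alpha P^{2}}\leq P$ for every $P\geq0$, and, $\mu$ being nonnegative, $e^{-(M(z)-M(y))}\leq1$ whenever $y\leq z$.

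First I would derive a \emph{delayed} integral inequality for $P_t$. Integrating the representation formula over $z\in[z_{\rm min},z_{\rm max}]$, applying Fubini, using the two elementary bounds above, and observing that $t-(z-y)\in[t-T_0,t]$ for $y\leq z$, one obtains, for all $t\geq T_0$,
$$
P_t\;\leq\;\Big(\sup_{s\in[t-T_0,\,t]}P_s\Big)\int_{z_{\rm min}}^{z_{\rm max}}\!\Big(\int_{z_{\rm min}}^{z}\gamma(y)e^{-(M(z)-M(y))}dy\Big)dz\;=\;\beta\,\sup_{s\in[t-T_0,\,t]}P_s .
$$

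Next I would iterate this over time windows of length $T_0$. Setting $M_k:=\sup_{t\in[kT_0,(k+1)T_0]}P_t$ for $k\geq0$, the inequality applied on $[(k+1)T_0,(k+2)T_0]$ gives $M_{k+1}\leq\beta\max(M_k,M_{k+1})$; since $\beta<1$ this forces $M_{k+1}\leq\beta M_k$, hence $M_k\leq\beta^{k}M_0\leq\beta^{k}\bar P$. Therefore $P_t\leq\beta^{-1}\bar P\,\beta^{\,t/T_0}$ for $t\geq0$, i.e. $P_t\to0$ exponentially fast, which is the first assertion. Finally, for \eqref{lrho} I would insert this decay back into the representation formula: for $t\geq T_0$ and uniformly in $z\in[z_{\rm min},z_{\rm max}]$,
$$
0\leq\rho(t,z)\leq\|\gamma\|_{L^\infty}\int_{z_{\rm min}}^{z}P_{t-(z-y)}\,dy\leq\|\gamma\|_{L^\infty}\,(z_{\rm max}-z_{\rm min})\,\sup_{s\in[t-T_0,\,t]}P_s,
$$
and the right-hand side tends to $0$ (indeed exponentially) as $t\to\infty$.

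The main obstacle, and the only genuinely nontrivial step, is the passage from the non-closed $z$-integrated equation to the closed delayed inequality $P_t\leq\beta\sup_{[t-T_0,t]}P_s$ through the characteristics/Duhamel representation, and then running the window iteration correctly — in particular using the a priori global bound $\bar P<\infty$ to initialize it and using the \emph{strict} inequality $\beta<1$ (with $\beta=1$ the same argument yields only boundedness of $P_t$, not decay). Everything else — nonnegativity, the bound $\frac{P}{1+\alpha P^{2}}\leq P$, and $e^{-(M(z)-M(y))}\leq1$ — is routine.
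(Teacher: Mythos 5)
Your proof is correct and follows essentially the same route as the paper's: the characteristics (Duhamel) representation for $t\geq z_{\rm max}-z_{\rm min}$, the bound $a(P)\leq P$ leading to the delayed inequality $P_t\leq\beta\sup_{s\in[t-T_0,t]}P_s$, iteration over windows of length $T_0$ to get $P_t\leq C\beta^{t/T_0}$, and reinsertion into the representation formula for the $L^\infty$ limit \eqref{lrho}. Your window bookkeeping $M_{k+1}\leq\beta\max(M_k,M_{k+1})$ is in fact slightly more careful than the paper's iteration, which silently absorbs the current window into a global supremum.
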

  \begin{proof}
 {Using the characteristics, we  find that for  $s \in [t-(z_{\rm max}-z_{\rm min}),t]$ and  $t \geq (z_{\rm max}-z_{\rm min})$,
 \begin{equation}\label{lPrho}
  \rho(t,z)= e^{-M(z)} \int_{z_{\rm min}}^{z} a(t-z+\tau) \gamma(\tau)  e^{M(\tau) } d\tau, \quad z \in [z_{\rm min},z_{\rm max}], \quad  a(s)=  \frac{P(s)}{1+ \alpha P^2(s)} \cdotp
  \end{equation}
 Integrating from $z_{\rm min}$ to $z_{\rm max}$, and using that $a$ is a  bounded function, we obtain that there exists $C>0$  such that 
 \begin{equation}\label{idP}
  P(t) =  \int_{z_{\rm min}} ^{z_{\rm max}}  e^{-M(z) }\int_{z_{\rm min}}^{z} a(t-z+\tau) \gamma(\tau)  e^{M(\tau) } d\tau dz \leq C.
  \end{equation}
 Moreover, as $\alpha > 0$,  we have $a(t)=  \frac{P(t)}{1+ \alpha P^2(t)}  \leq P(t)$, and so 
$$
 P(t) \leq \beta \sup_{s \in  [t-(z_{\rm max}-z_{\rm min}),t]} P(s).
 $$
 Now, we observe that, by iteration, for all integer $n \geq1$, for  all $t \geq (n+1)(z_{\rm max}-z_{\rm min})$,
$$
P(t) \leq \beta^n   \sup_{s \in  [t-n(z_{\rm max}-z_{\rm min}),t]} P(s) \leq \beta^n \sup_{s \geq (z_{\rm max}-z_{\rm min})} P(s) \leq C \beta^n
$$  
which leads to exponential convergence of $P$ to $0$ because $ \beta<1$. The proof of estimate \eqref{lrho} is then a direct consequence of identity \eqref{lPrho}, which ends the proof of Proposition \ref{prop2}.}  \hfill \qed 
 \end{proof}         
\begin{proposition}[case $\beta > 1$]\label{prop3}
Assume that $\rho^0(z) \neq 0$ and $1<\beta<9$. Then,  $P(t)$ goes exponentially fast to $P_{\rm eq}>0$ and, denoting $\rho_{\rm eq}$ the positive equilibrium state, we have:
$$
\lim\limits_{t\to+\infty} \Vert \rho(t) - \rho_{\rm eq}\Vert_{L^\infty (z_{\rm min},\, z_{\rm max})}= 0.
$$
\end{proposition}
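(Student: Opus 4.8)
The plan is to reduce everything to the scalar total population $P(t)$ and then transfer back to $\rho$ through the characteristics formula \eqref{lPrho}. Write $T:=z_{\rm max}-z_{\rm min}$. Swapping the order of integration in \eqref{idP} gives, for $t\ge T$, the closed renewal-type equation
$$
P(t)=\int_0^T K(\xi)\,g\big(P(t-\xi)\big)\,d\xi,\qquad g(p):=\frac{p}{1+\alpha p^2},\qquad K(\xi):=\int_{z_{\rm min}}^{z_{\rm max}-\xi}\gamma(\tau)\,e^{-(M(\tau+\xi)-M(\tau))}\,d\tau\ \ge 0,
$$
and a direct computation shows $\int_0^T K=\beta$, whereas Proposition~\ref{prop1} identifies $P_{\rm eq}$ by $P_{\rm eq}=\beta\,g(P_{\rm eq})$, i.e.\ $1+\alpha P_{\rm eq}^2=\beta$. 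Once exponential convergence $P(t)\to P_{\rm eq}$ is established, the statement for $\rho$ follows: since $\rho_{\rm eq}(z)=e^{-M(z)}\int_{z_{\rm min}}^z e^{M(\tau)}\gamma(\tau)\,a^*\,d\tau$ with $a^*=g(P_{\rm eq})$, subtracting from \eqref{lPrho} gives $\rho(t,z)-\rho_{\rm eq}(z)=e^{-M(z)}\int_{z_{\rm min}}^z e^{M(\tau)}\gamma(\tau)\,[a(t-z+\tau)-a^*]\,d\tau$, and since $|g'|\le 1$ one obtains $\|\rho(t)-\rho_{\rm eq}\|_{L^\infty(z_{\rm min},z_{\rm max})}\le C\sup_{s\in[t-T,t]}|P(s)-P_{\rm eq}|$.

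For $P$ itself, the plan is a contraction estimate over windows of length $T$, in the spirit of the iteration used in the proof of Proposition~\ref{prop2}. Using $P_{\rm eq}=\beta g(P_{\rm eq})$ together with the algebraic identity
$$
g(P)-g(P_{\rm eq})=(P-P_{\rm eq})\,\frac{1-\alpha P P_{\rm eq}}{(1+\alpha P^2)(1+\alpha P_{\rm eq}^2)}=(P-P_{\rm eq})\,\frac{1-\alpha PP_{\rm eq}}{\beta(1+\alpha P^2)},
$$
one rewrites the renewal equation as $P(t)-P_{\rm eq}=\int_0^T K(\xi)\,\Phi(P(t-\xi))\,(P(t-\xi)-P_{\rm eq})\,d\xi$ with $\beta\Phi(P)=\frac{1-\alpha PP_{\rm eq}}{1+\alpha P^2}$, hence $|P(t)-P_{\rm eq}|\le\big(\sup_{s\in[t-T,t]}|\beta\Phi(P(s))|\big)\sup_{s\in[t-T,t]}|P(s)-P_{\rm eq}|$. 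Setting $q=\sqrt\alpha\,P$, $q_{\rm eq}=\sqrt\alpha\,P_{\rm eq}=\sqrt{\beta-1}$ and $h(q)=\frac{1-q\,q_{\rm eq}}{1+q^2}$, an elementary variation study shows that $h$ decreases from $h(0)=1$ to its minimum $h(q^*)=\frac{1-\sqrt\beta}{2}$ at $q^*=\frac{1+\sqrt\beta}{\sqrt{\beta-1}}$, then increases back to $0$; therefore $\sup_{q\ge q_0}|h(q)|\le\max\big(h(q_0),\tfrac{\sqrt\beta-1}{2}\big)$, which is strictly less than $1$ as soon as $q_0>0$ \emph{and} $\beta<9$. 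This is exactly where the hypothesis $\beta<9$ is used (at $\beta=9$ one has $h(q^*)=-1$ and the estimate degenerates). Granting a positive lower bound $P(s)\ge\delta_0>0$ for large $s$, we obtain $\theta:=\sup_{q\ge\sqrt\alpha\,\delta_0}|h(q)|<1$ and thus $|P(t)-P_{\rm eq}|\le\theta\sup_{[t-T,t]}|P-P_{\rm eq}|$ for $t$ large; iterating over consecutive windows of length $T$ as in Proposition~\ref{prop2} yields $|P(t)-P_{\rm eq}|\le C e^{-ct}$, and the $\rho$-estimate above then concludes.

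It remains to supply two a priori facts. The upper bound is immediate from $g\le\frac{1}{2\sqrt\alpha}$, giving $P(t)\le\frac{\beta}{2\sqrt\alpha}$ for $t\ge T$ (this is essentially \eqref{idP}). Positivity $P(t)>0$ for all $t$ follows from $\rho^0\not\equiv 0$ and the nonnegativity of $K$ and $g$ in \eqref{lPrho}. The delicate point, which I expect to be the main obstacle, is \emph{uniform persistence}: $\liminf_{t\to\infty}P(t)>0$. The mechanism is that $\int_0^T K=\beta>1$, so the linearization of the renewal equation near $P=0$ has a positive growth exponent: if $P(s)<\varepsilon$ on an entire window then $g(P(s))\ge P(s)/(1+\alpha\varepsilon^2)$ there, so $P(t)\ge\frac{\beta}{1+\alpha\varepsilon^2}\inf_{[t-T,t]}P$ with $\frac{\beta}{1+\alpha\varepsilon^2}>1$ for $\varepsilon$ small, which forces the window-infimum of $P$ to be non-decreasing and hence bounded below by a positive constant; this already rules out $P(t)\to0$. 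Excluding the oscillatory alternative $\liminf P=0<\limsup P$ requires the additional observation, again from $K,g\ge0$ and \eqref{lPrho}, that a value $P(t)$ close to $0$ forces $P$ to be small on the whole preceding window, after which the same rebound estimate applies — the classical ``permanence'' bookkeeping for such models. With persistence established, the contraction step applies with a uniform $\theta<1$ and the proof is complete.
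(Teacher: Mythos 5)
Your proposal is correct and follows essentially the same route as the paper: the same contraction estimate $|P(t)-P_{\rm eq}|\le \sup|f_c(\sqrt{\alpha}P(s))|\cdot\sup|P(s)-P_{\rm eq}|$ over windows of length $z_{\rm max}-z_{\rm min}$, the same identification of $\beta<9$ via the minimum value $\frac{1-\sqrt{\beta}}{2}>-1$ of the contraction factor, and the same reduction to a uniform positive lower bound on $P$ (your sketched ``persistence'' step is exactly the content and mechanism of the paper's Lemma~\ref{lem1}, including the rebound estimate built from $\beta>1$). The transfer back to $\rho$ via the characteristics formula also matches the paper's argument.
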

{
\begin{remark}
The condition $\beta <9$ seems to be a technical condition which ensures that the nonlinearity of the equation is not too strong, allowing us in the proof to use a contraction type  argument.  However, numerically, we observe that the solution converges to a stationary state even if $\beta \geq 9$. Hence, we expect that Proposition~\ref{prop3} holds also for $\beta \geq 9$.
\end{remark}}
 \begin{proof}
 {
  As $\beta=1+ \alpha P_{\rm eq}^2$, with identity \eqref{idP}, we obtain that for  $t \geq z_{\rm max}-z_{\rm min}$ }
  $$ 
  |P(t)-P_{\rm eq}|  \leq  \sup_{s \in   [t-(z_{\rm max}-z_{\rm min}),t]} |a(s)-a^*| (1+ \alpha P_{\rm eq}^2),
  $$
  which leads us to 
  $$ 
   |P(t)-P_{\rm eq}|  \leq   \sup_{s \in   [t-(z_{\rm max}-z_{\rm min}),t]} \left|  \frac{P(s)  (1+ \alpha P_{\rm eq}^2)- P_{\rm eq}(1+ \alpha P^2(s))}{1+ \alpha P^2(s)} \right |,
   $$
  and
  $$
  |P(t)-P_{\rm eq}|  \leq   \sup_{s \in   [t-(z_{\rm max}-z_{\rm min}),t]} |  P(s)-P_{\rm eq}|  \left|  \frac{(1- \alpha PP_{\rm eq})}{1+ \alpha P^2(s)} \right|.
  $$
Denoting $c= \sqrt{ \beta -1}$, we have $ \displaystyle P_{\rm eq}= \frac{c}{ \sqrt{ \alpha}}, $
 which, with $f_c(x)= \frac{ 1-cx}{1+x^2}$, we obtain
 $$
 |P(t)-P_{\rm eq}|  \leq   \sup_{s \in   [t-(z_{\rm max}-z_{\rm min}),t]} |  P(s)-P_{\rm eq}|  \sup_{s \in   [t-(z_{\rm max}-z_{\rm min}),t]}  |f_c\left(\sqrt{\alpha} P(s)\right) |.
 $$
 { This ends the proof of Proposition \ref{prop3},   using the same arguments as in Proposition \ref{prop2}, assuming that for $1<\beta<9$,  
  $$
   \sup_{s \in   [t-(z_{\rm max}-z_{\rm min}),t]}  |f_c(\sqrt{\alpha} P(s)) | \leq C_1 <1,
   $$
   which is the purpose of the next lemma.}
\hfill \qed 
 
  \end{proof}
  \begin{lemma}\label{lem1}
 Under assumptions of Proposition~\ref{prop3}, the following estimate holds
  $$
  \sup\limits_{s\geq z_{\rm max}-z_{\rm min}} \vert f_c\big(\sqrt{\alpha} P(s)\big)\vert <1.
  $$ 
  \end{lemma}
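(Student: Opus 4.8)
The plan is to reduce the estimate to an elementary property of $f_c(x)=\frac{1-cx}{1+x^2}$ together with a uniform lower bound (persistence) for $P$. First I would record that, for $x\ge 0$, $f_c(x)\le 1$ (equivalently $x(x+c)\ge 0$), with equality only at $x=0$, and $f_c(x)>-1$ (equivalently $x^2-cx+2>0$), which holds for \emph{all} real $x$ precisely when $c^2<8$, i.e. $1<\beta<9$. Hence $|f_c|<1$ on $(0,\infty)$, and by continuity $|f_c|$ attains a maximum strictly below $1$ on any compact subinterval of $(0,\infty)$. So it suffices to show that $\sqrt{\alpha}\,P(s)$ stays in a fixed compact subinterval of $(0,\infty)$ for every $s\ge T_0:=z_{\rm max}-z_{\rm min}$, i.e. that $P$ is bounded above and bounded away from $0$ on $[T_0,\infty)$.

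The upper bound is already available: inserting $a(s)=\frac{P(s)}{1+\alpha P^2(s)}\le\frac{1}{2\sqrt{\alpha}}$ into \eqref{lPrho} (this is exactly \eqref{idP}) gives $P(t)\le C:=\frac{\beta}{2\sqrt{\alpha}}$ for $t\ge T_0$. For positivity I would first check that $P(t)>0$ for every $t\ge 0$: $P(0)=\int\rho^0>0$, the solution stays $\ge 0$, and if $P>0$ on $[t-T_0,t]$ then $a>0$ there, so by \eqref{lPrho} $\rho(t,\cdot)>0$ on the set of ages $z$ with $\int_{z_{\rm min}}^{z}\gamma>0$ — a set of positive measure since $\beta>0$ forces $\gamma\not\equiv 0$ — whence $P(t)>0$; a short continuation argument (using the Duhamel formula for $t<T_0$) propagates positivity to all times. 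Continuity plus positivity then give $\inf_{[T_0,T]}P>0$ for every finite $T$, so the only way the lower bound can fail is $\inf_{[T_0,\infty)}P=0$.

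Excluding this is the crux, and where $1<\beta$ enters: where $P$ is small the saturation is negligible, $a(s)\approx P(s)$, so the ``reproduction number'' $\beta>1$ forces $P$ to rebound. Concretely, with $\phi(p)=\frac{p}{1+\alpha p^2}$ (unimodal, maximal at $p=1/\sqrt{\alpha}$), I would prove that for $\eta>0$ small enough (depending only on $\alpha,\beta$) and $\rho_\eta:=\sqrt{\eta/(\alpha(1-\eta))}$ one has $a(s)=\phi(P(s))\ge(1-\eta)\min(P(s),\rho_\eta)$ for all $s\ge T_0$: the case $P(s)\le\rho_\eta$ from $\frac{1}{1+\alpha P^2}\ge 1-\eta$, and the case $\rho_\eta<P(s)\le C$ from unimodality, which gives $\phi(P(s))\ge\min(\phi(\rho_\eta),\phi(C))=\phi(\rho_\eta)=(1-\eta)\rho_\eta$ once $\eta$ is so small that $\phi(\rho_\eta)<\phi(C)$. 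Writing $P(t)=\int_{t-T_0}^{t}a(s)K(t,s)\,ds$ with $K\ge 0$ and $\int_{t-T_0}^{t}K(t,s)\,ds=\beta$ (both read off from \eqref{lPrho}, the mass by taking $a\equiv 1$), and shrinking $\eta$ so that $\beta_1:=(1-\eta)\beta>1$, this gives
\[
P(t)\ \ge\ \beta_1\,\min\!\Big(\inf_{[t-T_0,t]}P,\ \rho_\eta\Big),\qquad t\ge 2T_0 .
\]
If now $\inf_{[2T_0,\infty)}P=0$, I would pick $t_n\to\infty$, $t_n\ge 3T_0$, realizing the running infimum, so $P(t_n)=\inf_{[2T_0,t_n]}P\to 0$ and consequently $\inf_{[t_n-T_0,t_n]}P=P(t_n)$; for $n$ large ($P(t_n)<\rho_\eta$) the displayed bound becomes $P(t_n)\ge\beta_1 P(t_n)$, forcing $P(t_n)\le 0$ — a contradiction. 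Hence $\inf_{[3T_0,\infty)}P>0$, and with positivity on the compact set $[T_0,3T_0]$ we obtain $\delta_*:=\inf_{[T_0,\infty)}P>0$.

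Finally, for every $s\ge T_0$ one has $\sqrt{\alpha}\,P(s)\in[\sqrt{\alpha}\,\delta_*,\sqrt{\alpha}\,C]$, a compact subinterval of $(0,\infty)$, so by the first paragraph $\sup_{s\ge T_0}|f_c(\sqrt{\alpha}\,P(s))|\le\max_{[\sqrt{\alpha}\,\delta_*,\sqrt{\alpha}\,C]}|f_c|<1$, which is the claim. I expect the persistence bound $\delta_*>0$ to be the only delicate step; the positivity of $P$ at all finite times, the upper bound, and the behaviour of $f_c$ are routine, and the iteration producing geometric growth parallels the one used in the proof of Proposition~\ref{prop2}.
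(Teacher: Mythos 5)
Your proof is correct and follows essentially the same route as the paper's: reduce the claim to $|f_c|<1$ on $(0,\infty)$ (valid since $c<2\sqrt{2}$, i.e.\ $\beta<9$) plus a uniform positive lower bound on $P$, establish positivity of $P$ at all finite times from the characteristics formula, and derive persistence from the window recursion $P(t)\ge\beta\inf_{[t-T_0,t]}a(P(s))$ combined with the fact that $\beta a(p)\ge(1+\varepsilon)p$ for small $p$. Your $\eta$--$\rho_\eta$ splitting and running-infimum contradiction are a repackaging of the paper's $\varepsilon$--$\bar P$ conditions \eqref{P3} and its first-crossing argument.
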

  \begin{proof}
{ With  the assumptions of Proposition~\ref{prop3}, we have  $c<2\sqrt{2},$ and so $f_c(x) <1$ for all $x>0$.  Hence the proof of Lemma~\ref{lem1} can be reduced to show that there exists  $\bar{P}>0$ such that 
\begin{equation}\label{propP}
P(s)\geq \bar{P}  \hbox{ for all } s\geq z_{\rm max} -z_{\rm min}.
\end{equation}
 To do this, let us first prove that  for any $t\geq 0,$ $P(t)>0$.  Let us observe that, If  $P(t)>0$ for $t\leq z_{\rm max}-z_{\rm min}$, then, using the formula \eqref{eqPstatio} on $P$ we can deduce that $P(t)>0$ for all $t$. Indeed,  if  there exists $t_0 >z_{\rm max}-z_{\rm min}$  for which $P(t_0)=0$ and $P(t)>0$ for all $ t <t_0$, then, we have 
   $$
    P(t_0) =  \int_{z_{\rm min}} ^{z_{\rm max}}  e^{-M(z) }\int_{z_{\rm min}}^{z} a(t_0-z+\tau) \gamma(\tau)  e^{M(\tau) } d\tau dz
    $$
     with $a(t_0-z+\tau)>0$ everywhere except in $\tau=z$, which would bring us to $P(t_0)>0$, in direct contradiction with $P(t_0)=0$.} So it remains to show that $P(t)>0$ for $t\leq z_{\rm max}-z_{\rm min}$. For $t\leq z_{\rm max}-z_{\rm min}$, we have 
$$
     P(t)=  \int_{z_{\rm min}}^{z_{\rm min}+t} \rho(t,z)dz+   \int_{z_{\rm min}+t}^{z_{\rm max}} \rho(t,z)dz.
$$
   Yet, the method characteristics gives the representation formulas
$$
\rho(t,z)= e^{-M(t)} \rho(0,z-t) + e^{-M(t)} \int_{0}^{t} e^{M(s)}a(s)  \gamma(z-t+s)ds \;  \hbox{ for  }\;  t \leq z-z_{\rm min},
$$
 $$
 \rho(t,z)=  \int_{z_{\rm min}}^{z} e^{-M(z)+M(\tau)} \gamma(\tau) a(t-z+\tau)d\tau \quad \hbox{ for  }\;  t \geq z-z_{\rm min}.
 $$
 {By the same reasoning we obtain that $P(t)>0$  for  $t\leq z_{\rm max}-z_{\rm min}$, because $P(0)>0$, and hence $P(t)>0$  for all $t \geq 0$.  Let us now conclude estimate \eqref{propP}. We know that for $t \geq z_{\rm max}-z_{\rm min}$, }
 \begin{equation}\label{eqP}
P(t) \geq \beta \min_{s \in [t-(z_{\rm max}-z_{\rm min}),t]} a(P(s)).
\end{equation}
Since $\beta >1$, we can find $\bar{P}>0$  and $\varepsilon >0$ with the following properties:
\begin{equation}\left\{\begin{array}{ll}
&  \ \beta a(s)  \geq s(1+ \varepsilon), \quad \forall   s \leq \bar{P}, \\ 
&  \ a \text{ is strictly increasing on }(0, \bar{P}), \\ 
&  \  \beta  \inf_{s \in (\bar{P},\frac{1}{2 \sqrt{\alpha} }\beta)}a(s)  \geq \bar{P} (1+ \varepsilon).
\end{array}\right. \label{P3}
\end{equation}
Now, if $P(t) \geq \bar{P}$ for  $t \geq z_{\rm max}-z_{\rm min}$, our lower bound is proved. Else, there exists $t_0\geq z_{\rm max}-z_{\rm min}$ a time for which $P(t_0) \leq \bar{P}.$
In this case,  we have  
$$
\bar{P} \geq P_{\rm inf}(t_0)  \hbox{ where }  P_{\rm inf}(t_0)= \min_{s \in [t_0-(z_{\rm max}-z_{\rm min}),t_0]} P(s)>0.
$$
Let us first prove that 
\begin{equation}\label{PP2}
P(t_0) \geq P_{\rm inf}(t_0) (1+\varepsilon).
\end{equation}
{
Denoting  ${}^c A$ as the complement of $A$, we write 
$$ \{s \in [t_0-(z_{\rm max}-z_{\rm min}),t_0] \} = A \cup  A{}^c  \hbox{  with } A= \{  s \in [t_0-(z_{\rm max}-z_{\rm min}),t_0] \hbox{ such that }  P(s) \geq \bar{P} \}.$$}
Using~\eqref{eqP}, we obtain
$$ P(t_{0}) \geq   \min( \min_{s \in A} \beta a(P(s)) , \min_{s \in {}^cA}  \beta a(P(s)).$$
 Using the  first part of ~\eqref{P3}, we deduce that 
  $$\beta \min_{s \in {}^cA} a(P(s)) \geq   P_{\rm inf}(t_0) (1+\varepsilon),$$
  and using the second part of ~\eqref{P3}, we obtain that 
  $$\beta \min_{s \in A} a(P(s)) \geq  \bar{P}(1+\varepsilon) \geq  P_{\rm inf}(t_0) (1+\varepsilon)$$
  and so  estimate \eqref{PP2} holds. { Let us now prove that, for all  $t \geq t_0$, 
$ \displaystyle P(t) >  P_{\rm inf}(t_0).$} If it is not the case, since  
  $ \displaystyle P(t_{0}) >P_{\rm inf}(t_0),$
 we would find   $t_1>t_0$  for which
  $$
  P(t_1)= P_{\rm inf}(t_0)\hbox{ and  }   \min_{s\in [t_1-(z_{\rm max}-z_{\rm min}),t_1]}P(s) \geq P_{\rm inf}(t_0) .
  $$
  Combining again~\eqref{eqP} and~\eqref{P3} , we would obtain
  $$
  P(t_{1}) \geq  P_{\rm inf}(t_0)(1+ \varepsilon),
  $$
which is in contradiction with $P(t_1)=P_{\rm inf}(t_0)$.This completes the proof of  of Lemma \ref{lem1}.
  
\hfill \qed 
\end{proof}
 
\subsection{Numerical method}
\label{appen}

{The examples of subsection~\ref{numex1} have been obtained using a numerical method that we explain now.
 We discretize the time  interval $[0,T]$ using uniform subintervals of size $\delta_t$,  and  the age interval $[z_{\rm min},z_{\rm max}]$ is discretized using uniform subintervals of size $\delta_z$. The, we set $z_j=z_{\rm min}+ j\delta z$, $t_k=k\delta t$ and $\rho (t_k,z_j)=\rho_j^k$, $\mu (z_j)=\mu_j$, $\gamma (z_j)=\gamma_j$ (with $\rho_0^k=0$ and $\rho_j^0$ given).
  We choose the following semi-implicit scheme, for $j\ge 1$ and $k\ge 0$,   
 $$
    \overbrace{\frac{\rho_j^{k+1}-\rho_j^k}{\delta t}+\frac{\rho_j^k-\rho_{j-1}^k}{\delta z}}^{Workforce \, aging}=-\overbrace{\mu_j\rho_j^{k+1}}^{Attrition}+\overbrace{\frac{P_k}{1+\alpha P_k^2}\gamma_j}^{Hiring},
   \hbox{ where } P_k=\sum_{j}^{}\rho_j^k.
$$
   Let us mention that, according to the Courant-Friedrichs-Levi condition (\cite{bouchut2004nonlinear,leveque1992numerical}), we have to impose the condition $\frac{\delta t}{\delta z}\leq1$ for stability. }  
  
\section{{ Study of}  the linear model ~\eqref{eq3}} 
\label{ap2}

{  We determine the set of possible stationary states of Equation~\eqref{eq3}  and study the asymptotic behavior of the solution. We  prove that the steady states are all proportional to  a particular  positive state 
and  show that, under certain assumptions,   the solution converges to a non zero stationary state. The tools used involve, in particular, entropy methods (\cite{michel2005general,perthame2006transport}).} We recall the equation~\eqref{eq3} under study 
     \begin{equation*}
        \left\{
\begin{array}{l} 
  \frac{\partial\rho}{\partial t}(t,z)+\frac{\partial\rho}{\partial z}(t,z)=-\mu (z)\rho(t,z)+h([\rho])\gamma (z), \qquad z_{\rm min}<z<z_{\rm max}, \\
  \rho(t,z_{\rm min})=0,\\
  \rho(0,z)=\rho^0(z)\geq 0,
\end{array}
\right.
     \end{equation*}
 with
 $$
 h([\rho])=\frac{\int_{z_{\rm min}}^{z_{\rm max}}{\omega(z)\mu (z)\rho(t,z)dz}+\omega (z_{\rm max})\rho (t,z_{\rm max})-\int_{z_{\rm min}}^{z_{\rm max}}{\rho(t,z)\frac{\partial\omega}{\partial z}(z)dz}}{\int_{z_{\rm min}}^{z_{\rm max}}{\omega(z)\gamma (z)dz}}.  
 $$
 {
To ensures that $\rho \in C^1_b(\mathbb{R}_+ \times [z_{\rm min},z_{\rm max}])$ (see \cite{perthame2006transport}),  we assume that  
\begin{equation}\label{hypdo}
 \gamma, \ \mu  \in C([z_{\rm min}, _{max}]) ,\quad  \rho^0 \in C^1([z_{\rm min},z_{\rm max}]), \quad \omega \in C^1([z_{\rm min},z_{\rm max}]).
\end{equation} }
  
\subsection{{ Study of} stationary states}

\begin{proposition}\label{propB1}
 {The set $S$ of  steady states  of Equation \eqref{eq3} is given by
  $$ S= \{ h: [z_{\rm min}, z_{\rm max}] \to \R^+ \hbox{ such that }  h(z)=C \int_{z_{\rm min}}^{z}{ \gamma(y)e^{-(M(z)-M(y))}dy}, \hbox{ where } C \in \R^+\}. $$}
\end{proposition}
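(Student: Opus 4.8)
The plan is to find the stationary states of~\eqref{eq3} by setting $\partial_t\rho=0$, which turns the PDE into the linear ODE $\frac{\mathrm d\rho_{\rm eq}}{\mathrm dz}(z)=-\mu(z)\rho_{\rm eq}(z)+h([\rho_{\rm eq}])\gamma(z)$ on $[z_{\rm min},z_{\rm max}]$, together with the boundary condition $\rho_{\rm eq}(z_{\rm min})=0$. The crucial observation is that, \emph{a priori}, $h([\rho_{\rm eq}])$ is just some nonnegative real number, say $\lambda$, determined by the functional once $\rho_{\rm eq}$ is known; so I would first solve the ODE treating $\lambda$ as a free parameter. Multiplying by the integrating factor $e^{M(z)}$ gives $\frac{\mathrm d}{\mathrm dz}\big(\rho_{\rm eq}(z)e^{M(z)}\big)=\lambda\gamma(z)e^{M(z)}$, and integrating from $z_{\rm min}$ using $\rho_{\rm eq}(z_{\rm min})=0$ yields
$$
\rho_{\rm eq}(z)=\lambda\int_{z_{\rm min}}^{z}\gamma(y)e^{-(M(z)-M(y))}\,\mathrm dy .
$$
This already has the claimed shape with $C=\lambda\ge 0$; since $\gamma\ge 0$ one checks $\rho_{\rm eq}\ge 0$, so every such function is a genuine candidate in the stated class $S$.

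The second half of the argument is the consistency check: I must verify that \emph{any} $\lambda\ge 0$ is actually admissible, i.e.\ that plugging $\rho_{\rm eq}=\lambda\,\psi$, with $\psi(z):=\int_{z_{\rm min}}^{z}\gamma(y)e^{-(M(z)-M(y))}\,\mathrm dy$, back into the formula for $h([\cdot])$ returns exactly $\lambda$. The key point here is that $h$ is a \emph{linear} functional of $\rho$ divided by a fixed constant $\int\omega\gamma$, so $h([\lambda\psi])=\lambda\, h([\psi])$; it therefore suffices to show $h([\psi])=1$. But this is precisely the integration-by-parts identity already exploited in the main text: applying $\int_{z_{\rm min}}^{z_{\rm max}}\omega(z)\,(\cdot)\,\mathrm dz$ to the stationary ODE satisfied by $\psi$ (with $\lambda=1$) and integrating the $\omega\,\psi'$ term by parts gives
$$
\omega(z_{\rm max})\psi(z_{\rm max})-\int_{z_{\rm min}}^{z_{\rm max}}\psi(z)\omega'(z)\,\mathrm dz
=-\int_{z_{\rm min}}^{z_{\rm max}}\omega(z)\mu(z)\psi(z)\,\mathrm dz+\int_{z_{\rm min}}^{z_{\rm max}}\omega(z)\gamma(z)\,\mathrm dz,
$$
where the boundary term at $z_{\rm min}$ vanishes because $\psi(z_{\rm min})=0$. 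Rearranging shows the numerator of $h([\psi])$ equals $\int\omega\gamma$, hence $h([\psi])=1$, and therefore $\rho_{\rm eq}=\lambda\psi$ solves the stationary problem for every $\lambda\ge0$. Combining the two directions gives $S$ exactly as stated, with $C$ ranging over $\R^+$.

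I do not expect a serious obstacle here; the only mild subtlety is bookkeeping about the boundary behaviour at $z_{\rm min}$ (both $\rho_{\rm eq}(z_{\rm min})=0$ and $\psi(z_{\rm min})=0$ are needed to kill boundary terms) and making sure the regularity hypotheses~\eqref{hypdo} are enough for the integration by parts and for $\psi\in C^1$. One should also note the degenerate case: if $\int_{z_{\rm min}}^{z_{\rm max}}\omega(z)\gamma(z)\,\mathrm dz=0$ the functional $h$ is ill-defined, so this is implicitly assumed nonzero (which it is, since $\omega>0$ and $\gamma\ge0$ with $\gamma\not\equiv0$). With that caveat the proof is essentially the computation already displayed in Section~\ref{sub31}, now read in the reverse direction and combined with linearity of $h$.
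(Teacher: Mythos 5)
Your proposal is correct and follows essentially the same route as the paper: solve the stationary ODE with the integrating factor $e^{M(z)}$ and the boundary condition $\rho_{\rm eq}(z_{\rm min})=0$ to get $\rho_{\rm eq}=\lambda\psi$, then use linearity of $h$ and an integration by parts (equivalent to the paper's identity $\omega\gamma=\omega(\psi'+\mu\psi)$) to check that the consistency constraint $h([\lambda\psi])=\lambda$ holds automatically for every $\lambda\ge 0$. Your added remarks on the boundary terms and on the nondegeneracy of $\int\omega\gamma$ are sensible but do not change the argument.
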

\begin{proof}
  The equilibrium equation is     
  $$
  \frac{\mathrm{d}\rho_{\rm eq}}{\mathrm{d}z}(z)=-\mu (z)\rho_{\rm eq}(z)+h^*\gamma (z), $$
  with 
  \begin{equation}\label{h*}
  h^*=\frac{\int_{z_{\rm min}}^{z_{\rm max}}{\omega(z)\mu (z)\rho_{\rm eq}(z)dz}+\omega (z_{\rm max})\rho_{\rm eq} (z_{\rm max})-\int_{z_{\rm min}}^{z_{\rm max}}{\rho_{\rm eq}(z)\frac{\partial\omega}{\partial z}(z)dz}}{\int_{z_{\rm min}}^{z_{\rm max}}{\omega(z)\gamma (z)dz}} \cdotp
  \end{equation}
   So, { with the boundary condition   $\rho(z_{\rm min})=0$}, we obtain {that $\rho_{eq}$ should satisfies the implicit formula}
   \begin{equation}\label{imp}
   \rho_{\rm eq}(z)=\int_{z_{\rm min}}^{z}{h^*\gamma(y)e^{-(M(z)-M(y))}dy}.
   \end{equation}
   { Observing that 
   $$\omega(z)\gamma (z)= \omega(z) e^{-M(z)} \left( \int_{z_{\rm min}}^{z}\gamma(y)e^{M(y)}dy \right)',$$ 
   an integration by part,  shows that 
   $$\bea
 & \int_{z_{\rm min}}^{z_{\rm max}}{\omega(z)\gamma (z)dz}   
  \\[5pt]
   &=\int_{z_{\rm min}}^{z_{\rm max}}{   \left((\omega(z)\mu(z)-\omega^{'}(z))\left(\int_{z_{\rm min}}^{z}{\gamma(y)e^{-(M(z)-M(y))}dy}\right)+\omega(z_{\rm max})\gamma(z)e^{-(M(z_{\rm max})-M(z))}   \right)dz},
   \eea
   $$  
hence the constraint \eqref{h*} always holds which ends the proof of Proposition \ref{propB1}. \hfill \qed }
   \end{proof}

\subsection{Asymptotic behavior}

   { In this section, we assume that the cost  $w$ satisfies the following assumptions
\begin{equation}\label{asumpw}
   w \geq 0, \qquad w' \leq \mu w,  \qquad w(z_{\rm max})>0.
\end{equation}
   In  order to prove that the hiring strategy under consideration converges, we introduce some notations. We rewrite the equation as:
$$
\frac{\partial\rho}{\partial t}(t,z)+\frac{\partial\rho}{\partial z}(t,z)=-\mu (z)\rho(t,z)+A\gamma (z)\rho(t,z_{\rm max})+\gamma (z)\int_{z_{\rm min}}^{z_{\rm max}}{B(y)\rho(t,y)dy},  
$$ 
with, using assumption \eqref{asumpw},  }
$$
  A=\frac{\omega (z_{\rm max})}{\int_{z_{\rm min}}^{z_{\rm max}}{\omega(z)\gamma (z)dz}} > 0, \qquad B(y)= \frac{\mu (y)\omega(y)-\frac{\partial\omega}{\partial y}(y)}{\int_{z_{\rm min}}^{z_{\rm max}}{\omega(z)\gamma (z)dz}}\geq 0.
$$
{In the following, $\rho_{\rm eq}$ denotes the particular stationary solution given by 
\begin{equation}\label{choicep}
\rho_{\rm eq}= \int_{z_{\rm min}}^{z}\gamma(y)e^{-(M(z)-M(y))}dy.
\end{equation}
The following proposition holds.}
 \begin{proposition}\label{proplemme}
 {
     Assume  \eqref{asumpw}, \eqref{hypdo}, that $w \gamma >0$ on $[z_{\rm min},z_{\rm max}]$, let  $\rho^0$ a positive initial data and  $\rho_{eq}$ given by \eqref{choicep}.  Then, 
$$
\lim_{t \to +\infty}\int_{z_{\rm min}}^{z_{\rm max}} \left( \frac{\rho(t,z)-m\rho_{\rm eq}(z)}{\rho_{eq}(z)} \right)^2\omega(z)  \gamma(z)dz =0, \hbox{ where } m=\frac{\int_{z_{\rm min}}^{z_{\rm max}}\rho^0(z) \omega(z)dz}{\int_{z_{\rm min}}^{z_{\rm max}}\rho_{\rm eq}(z) \omega(z)dz} \cdotp
$$}
\end{proposition}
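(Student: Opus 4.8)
The plan is to use the General Relative Entropy (GRE) method adapted to this linear renewal-type equation with a nonlocal source, combined with a spectral/positivity argument to pin down the limiting multiple $m$. First I would introduce the natural ``adjoint'' weight: since $\rho_{\rm eq}$ solves the stationary equation $\rho_{\rm eq}' = -\mu\rho_{\rm eq} + h^*\gamma$ with $h^*$ satisfying the self-consistency identity from Proposition~\ref{propB1}, I would look for a positive function $\phi(z)$ solving the dual equation so that $\int \phi(z)\rho(t,z)\,dz$ is conserved; the computation already displayed in Section~\ref{sub31} shows that $\phi = \omega$ works, i.e. $\frac{d}{dt}\int_{z_{\rm min}}^{z_{\rm max}}\omega(z)\rho(t,z)\,dz = 0$. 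This immediately identifies $m\int\rho_{\rm eq}\omega = \int\rho^0\omega$, so the candidate limit is forced to be $m\rho_{\rm eq}$ with $m$ as in the statement; it also lets me reduce to the case $m=1$ by linearity (replacing $\rho^0$ by $\rho^0/m$ if $m\neq 0$, and treating $m=0$ as a limiting/degenerate subcase).

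Next I would set $u(t,z) = \rho(t,z)/\rho_{\rm eq}(z)$ (legitimate since $\rho_{\rm eq}>0$ on $(z_{\rm min},z_{\rm max}]$ by \eqref{choicep} and the assumption $\gamma>0$, with the boundary behavior at $z_{\rm min}$ handled carefully) and derive the equation satisfied by $u$. The key algebraic step is to rewrite the transport-plus-death part in divergence form against the measure $\omega(z)\gamma(z)\,dz$ using the identity proved at the end of the proof of Proposition~\ref{propB1}, which is precisely the statement that the ``flux out'' balances the ``flux in''; this is the structural reason the entropy dissipates. For a convex function $H$ (I would take $H(u) = (u-1)^2$ for the stated $L^2$-type functional, though the general GRE works for any convex $H$), I would compute $\frac{d}{dt}\int_{z_{\rm min}}^{z_{\rm max}} H\!\left(\frac{\rho(t,z)}{\rho_{\rm eq}(z)}\right)\omega(z)\gamma(z)\,dz$ and show it equals a manifestly nonpositive dissipation term $-D[u](t)$, where $D$ is a sum of (i) a nonnegative boundary term at $z_{\rm max}$ coming from $A\,\rho(t,z_{\rm max})>0$ together with $w(z_{\rm max})>0$, and (ii) a nonnegative ``convexity defect'' term of Jensen type coming from the nonlocal averaging against $B(y)\,dy \geq 0$ weighted by $\rho_{\rm eq}$. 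Assumption \eqref{asumpw}, namely $w\geq 0$, $w'\leq\mu w$, $w(z_{\rm max})>0$, is exactly what guarantees $A>0$ and $B\geq 0$ so that both pieces have a sign.

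Finally, from $\int H(u(t))\,\omega\gamma\,dz$ being nonincreasing and bounded below, I get convergence of the entropy and integrability in time of the dissipation $D[u](t)$. The last step is the LaSalle-type argument: one shows $\rho(t,\cdot)$ is bounded in $C^1$ (by \eqref{hypdo} and the a priori bounds on $h([\rho])$, using the characteristics formulas of Appendix~\ref{appen1}), extracts along any time sequence a limit which by the dissipation being zero must satisfy $u\equiv\text{const}$, i.e. $\rho = m'\rho_{\rm eq}$; the conservation law $\int\rho\omega\,dz = \int\rho^0\omega\,dz$ forces $m' = m$, and since the limit is independent of the subsequence the full trajectory converges. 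I expect the main obstacle to be the dissipation computation: one must carefully handle the nonlocal term $\gamma(z)\int B(y)\rho(t,y)\,dy$ — it is not a pointwise multiplication operator, so the usual renewal-equation GRE identity needs the global balance identity from Proposition~\ref{propB1} to close, and one must verify that the ``cross terms'' between the local death/transport part and the nonlocal birth part reassemble into a genuine Jensen-type nonnegative quantity rather than an indefinite one. A secondary technical point is the behavior at $z=z_{\rm min}$ where $\rho_{\rm eq}(z_{\rm min})=0$: the ratio $u$ may be singular there, so I would either work on $[z_{\rm min}+\eta, z_{\rm max}]$ and pass to the limit $\eta\to 0$, or note that $\rho_{\rm eq}(z)\sim \gamma(z_{\rm min})(z-z_{\rm min})$ near $z_{\rm min}$ and that $\rho(t,z)$ has the same vanishing rate (both come from the same boundary condition and source $\gamma$), so $u$ extends continuously.
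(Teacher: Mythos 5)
Your proposal follows essentially the same route as the paper: the conserved quantity $\int\omega\rho$ identifies $m$, and a general relative entropy computation (with the dissipated entropy carrying the weight $\omega\rho_{\rm eq}$ rather than the $\omega\gamma$ you wrote, the latter appearing only in the dissipation terms) yields a nonincreasing functional whose dissipation --- the $z_{\rm max}$ boundary term with factor $A>0$ plus the Jensen-type nonlocal term with kernel $B\ge 0$ --- forces $\rho/\rho_{\rm eq}$ to become constant, after which the conservation law pins the constant to $m$. Your LaSalle-style conclusion is a mild variant of the paper's more direct argument (time-integrability of the dissipation together with $C^1_b$ regularity implies it tends to zero), and your treatment of the ratio at $z_{\rm min}$ matches the paper's remark.
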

\begin{proof}
{The proof of  Proposition \ref{proplemme} is based  on a  general relative entropy inequality stated in the following lemma  (see  \cite{perthame2006transport} for general entropy methods).} 

\begin{lemma}\label{propentro}
{
Assume that \eqref{asumpw} holds and let $\rho$ be a solution of \eqref{eq3} and let $\rho_{eq}$ given by \eqref{choicep}. Then, the  following estimate holds
$$
\frac{\mathrm{d}}{\mathrm{d}t}\int_{z_{\rm min}}^{z_{\rm max}}\omega(z)\rho_{\rm eq}(z)\left(\frac{\rho(t,z)}{\rho_{\rm eq}(z)}\right)^2 dz \leq - A \rho_{\rm eq}(z_{\rm max})  \int_{z_{\rm min}}^{z_{\rm max}} \omega(z)\gamma(z)  \left(\frac{\rho(t,z_{\rm max})}{\rho_{\rm eq}(z_{\rm max})}- \frac{\rho(t,z)}{\rho_{\rm eq}(z)}\right)^2dz.$$}
   \end{lemma}

\begin{remark}
{
Let us mention that even if $\rho_{eq}(z_{\rm min})=0$, $\displaystyle \frac{\rho(t,z_{\rm min})}{\rho_{eq}(z_{\rm min})}$ is well defined. Indeed, 
$$\frac{\rho(t,z_{\rm min})}{\rho_{eq}(z_{\rm min})} = \lim_{z \to z_{\rm min}} \frac{\rho(t,z)}{\rho_{eq}(z)} = \frac{\partial_z\rho(t,z_{\rm min})}{\partial_z \rho_{eq}(z_{\rm min})}=
A \rho(t,z_{\rm max})+ \int_{z_{\rm min}}^{z_{\rm max}}B(y) \rho(t,y)dy.$$}
\end{remark}
{
Let us for instance assume that Lemma \ref{propentro} holds and let us finish the proof of Proposition \ref{proplemme}.  Let $\rho$ be a solution of \eqref{eq3} with a nonnegative initial data.  We set 
$$n(t,z)=  \rho(t,z)-m\rho_{\rm eq}(z), \hbox{ where } m=\frac{\int_{z_{\rm min}}^{z_{\rm max}}\rho^0(z) \omega(z)dz}{\int_{z_{\rm min}}^{z_{\rm max}}\rho_{\rm eq}(z) \omega(z)dz} \cdotp $$
Then, $n$ is solution of Equation \eqref{eq3}, and so by Lemma \ref{propentro},
$$\frac{\mathrm{d}}{\mathrm{d}t}\int_{z_{\rm min}}^{z_{\rm max}}\omega(z)\rho_{\rm eq}(z)\left(\frac{n(t,z)}{\rho_{\rm eq}(z)}\right)^2 dz \leq - A \rho_{\rm eq}(z_{\rm max})  \int_{z_{\rm min}}^{z_{\rm max}} \omega(z)\gamma(z)  \left(\frac{n(t,z_{\rm max})}{\rho_{\rm eq}(z_{\rm max})}- \frac{n(t,z)}{\rho_{\rm eq}(z)}\right)^2dz.$$
 Because, $\displaystyle A\rho_{eq}(z_{\rm max})>0$, we deduce, integrating in time the above equation, that
$$\int_{t=0}^{+\infty} \int_{z_{\rm min}}^{z_{\rm max}} \omega(z)\gamma(z)  \left(\frac{n(t,z_{\rm max})}{\rho_{\rm eq}(z_{\rm max})}- \frac{n(t,z)}{\rho_{\rm eq}(z)}\right)^2dz dt<+\infty.$$
As  we have assume \eqref{hypdo}, we have
  $$  \int_{z_{\rm min}}^{z_{\rm max}} \omega(z)\gamma(z)  \left(\frac{n(t,z_{\rm max})}{\rho_{\rm eq}(z_{\rm max})}- \frac{n(t,z)}{\rho_{\rm eq}(z)}\right)^2dz \in \mathcal{C}^{1}_b(\R^+), $$
we deduce that 
$$\lim_{t \to +\infty} \int_{z_{\rm min}}^{z_{\rm max}} \omega(z)\gamma(z)  \left(\frac{n(t,z_{\rm max})}{\rho_{\rm eq}(z_{\rm max})}- \frac{n(t,z)}{\rho_{\rm eq}(z)}\right)^2dz=0.$$
As   $\displaystyle    \left(\frac{n(t,z_{\rm max})}{\rho_{\rm eq}(z_{\rm max})}- \frac{n(t,z)}{\rho_{\rm eq}(z)}\right)^2 \in \mathcal{C}^{1}(\R^+ \times [z_{\rm min},z_{\rm max}]),$
 we deduce that for all $z \in [z_{\rm min}, z_{\rm max}]$ such that $w\gamma(z) >0$    
$$\frac{n(t,z_{\rm max})}{\rho_{\rm eq}(z_{\rm max})}= \lim_{t \to +\infty} \frac{n(t,z)}{\rho_{eq}(z)} \cdotp$$
If $w\gamma >0$ on $[z_{\rm min},z_{\rm max}],$ as $\displaystyle  \int_{z_{\rm min}} ^{z_{\rm max}}w(z) n(t,z)dz=0$, we obtain that  $\lim_{t \to +\infty} n(t,z)=0$ for all  $z \in [z_{\rm min}, z_{\rm max}]$,  
which gives Proposition \ref{proplemme}.}
\hfill \qed
\end{proof}

\vspace{0,5cm}

\noindent{\bf Proof of  Lemma \ref{propentro}.}
{ Let $H$ be a convex function. One can computate
 $$
 \frac{\partial}{\partial t}\left(H\left( \frac{ \rho(t,z)}{ \rho_{\rm eq}(z)}\right)\right)+\frac{\partial}{\partial z}\left( H\left(\frac{ \rho(t,z)}{ \rho_{\rm eq}(z)}\right)\right)=
 $$
 $$
 H^{'}\left(\frac{ \rho(t,z)}{ \rho_{\rm eq}(z)}\right)\left(\int_{z_{\rm min}}^{z_{\rm max}}{ \gamma(z)B(y) \frac{ \rho_{\rm eq}(y)}{ \rho_{\rm eq}(z)}\left( \frac{ \rho(t,y)}{ \rho_{\rm eq}(y)}- \frac{ \rho(t,z)}{ \rho_{\rm eq}(z)}\right)dy}+A \gamma(z) \frac{ \rho_{\rm eq}(z_{\rm max})}{ \rho_{\rm eq}(z)}\left(\frac{ \rho(t,z_{\rm max})}{ \rho_{\rm eq}(z_{\rm max})}- \frac{ \rho(t,z)}{ \rho_{\rm eq}(z)}\right)\right). 
 $$ 
We also have
$$\frac{\partial}{\partial t}\left(\omega \rho_{\rm eq}(z)\right)+\frac{\partial}{\partial z}\left(\omega \rho_{\rm eq}(z)\right)  =A\omega(z)\rho_{\rm eq}(z_{\rm max})+\int_{z_{\rm min}}^{z_{\rm max}}{(B(y)\rho_{\rm eq}(y)\gamma(z)\omega(z)-B(z)\rho_{\rm eq}(z)\gamma(y)\omega(y))dy},
   $$
 then 
  $$
  \bea 
  \frac{\partial}{\partial t}&\left( \omega(z)\rho_{\rm eq}(z)H\left(\frac{ \rho(t,z)}{ \rho_{\rm eq}(z)}\right)\right)+\frac{\partial}{\partial z}\left( \omega(z)\rho_{\rm eq}(z)H\left(\frac{ \rho(t,z)}{ \rho_{\rm eq}(z)}\right)\right)
   \\[5pt]
   &=H\left(\frac{\rho(t,z)}{\rho_{\rm eq}(z)}\right)\left(\int_{z_{\rm min}}^{z_{\rm max}}{(B(y)\rho_{\rm eq}(y)\gamma(z)\omega(z)-B(z)\rho_{\rm eq}(z)\gamma(y)\omega(y))dy} \right)
   \\[5pt]
     &+\omega(z) \gamma(z)H^{'}\left(\frac{\rho(t,z)}{\rho_{\rm eq}(z)}\right)\left(\int_{z_{\rm min}}^{z_{\rm max}}{B(y)\frac{ \rho_{\rm eq}(y)}{ \rho_{\rm eq}(z)}\left( \frac{ \rho(t,y)}{ \rho_{\rm eq}(y)}- \frac{ \rho(t,z)}{ \rho_{\rm eq}(z)}\right)dy}\right)
      \\[5pt]
      &+H\left(\frac{\rho(t,z)}{\rho_{\rm eq}(z)}\right)A\gamma(z)\omega(z)\rho_{\rm eq}(z_{\rm max})+H^{'}\left(\frac{\rho(t,z)}{\rho_{\rm eq}(z)}\right)\gamma(z)\omega(z)A\rho_{\rm eq}(z_{\rm max})\left(\frac{ \rho(t,z_{\rm max})}{ \rho_{\rm eq}(z_{\rm max})}- \frac{ \rho(t,z)}{ \rho_{\rm eq}(z)}\right). 
         \eea
$$  
We now integrate in $z$ and  obtain  that
$$
\frac{\mathrm{d}}{\mathrm{d}t}\left(\int_{z_{\rm min}}^{z_{\rm max}}{\omega(z)\rho_{\rm eq}(z)H\left(\frac{\rho(t,z)}{\rho_{\rm eq}(z)}\right)dz}\right)=-D_1^H(t)-D_2^H(t),$$
 where 
 $$D_1^H(t)=\int_{z_{\rm min}}^{z_{\rm max}}{\int_{z_{\rm min}}^{z_{\rm max}}{\omega\gamma(z)B\rho_{\rm eq}(y)\left (H\left(\frac{\rho(t,y)}{\rho_{\rm eq}(y)}\right)-H\left(\frac{\rho(t,z)}{\rho_{\rm eq}(z)}\right)-H^{'}\left(\frac{\rho(t,z)}{\rho_{\rm eq}(z)}\right)\left(\frac{\rho(t,y)}{\rho_{\rm eq}(y)}-\frac{\rho(t,z)}{\rho_{\rm eq}(z)}\right)\right)dydz}} $$ and 
 $$D_2^H(t)=A\rho_{\rm eq}(z_{\rm max}){\int_{z_{\rm min}}^{z_{\rm max}}{\omega \gamma(z)\left(H\left(\frac{\rho(t,z_{\rm max})}{\rho_{\rm eq}(z_{\rm max})}\right)-H\left(\frac{\rho(t,z)}{\rho_{\rm eq}(z)}\right)-H^{'}\left(\frac{\rho(t,z)}{\rho_{\rm eq}(z)}\right)\left(\frac{\rho(t,z_{\rm max})}{\rho_{\rm eq}(z_{\rm max})}-\frac{\rho(t,z)}{\rho_{\rm eq}(z)}\right) \right)}dz}. $$
 Using that  for a convex function $H(x)-H(y) -H'(y)(x-y) \geq 0$, we deduce that $D_1^H \leq 0$ and $D_2^H \leq 0$.  Taking $H(x)=x^2$ we obtain Lemma \ref{propentro}. 
 }
      \hfill \qed 

\subsection{Numerical method} 
\label{ap3}
    
      We take the same notations as in \ref{appen}. This numerical method is applied to the examples of subsection~\ref{numex2}. In order to simplify some calculations, we choose to determine the concentration of workers of age $z$ at time $t$, $\rho(t,z)$ with an explicit scheme, because, this way, the equation is kept conservative at a discrete level. Additionally, the expression of the hiring rate $h([\rho])$ does not depend on the time step $\delta t $. However, the Courant-Friedrichs-Levi condition (\cite{bouchut2004nonlinear,leveque1992numerical}) is more restrictive: $1- \max_j(\mu_j) \delta t -\frac{\delta t}{\delta z} \ge 0$. The first order discretization is 
      $$
      \overbrace{\frac{\rho_j^{k+1}-\rho_j^k}{\delta t}+\frac{\rho_j^k-\rho_{j-1}^k}{\delta z}}^{Workforce \, evolution}=-\overbrace{\mu_j\rho_j^{k}}^{Attrition}+\overbrace{h_k\gamma_j}^{Hiring},
     \hbox{ and so } \
      \rho_j^{k+1}=\rho_{j}^k(1-\mu_j \delta t)+\delta t\left(h_k\gamma_j-\frac{\rho_{j}^k-\rho_{j-1}^k}{\delta z}\right).  
      $$
       Since we know that 
       $$
       \sum_{j=1}^{J}\omega_j\rho_j^k=\sum_{j=1}^{J}\omega_j\rho_j^{k+1},  \hbox{ we have, }   \      \omega_j\frac{\rho_j^{k+1}-\rho_j^k}{\delta t}+\omega_j\frac{\rho_j^k-\rho_{j-1}^k}{\delta z}=-\omega_j\mu_j\rho_j^{k}+\omega_jh_k\gamma_j,
        $$
         therefore, by summing from $j=1$ to $J$, one immediately gets 
       $$
       h_k=\frac{\sum_{j}^{}\left(\frac{\omega_j(\rho_j^k-\rho_{j-1}^k)}{\delta z}+\mu_j\rho_j^k\omega_j\right)}{\sum_{j}^{}\gamma_j \omega_j} \cdotp
       $$ 

 \section{Labor costs minimization}\label{ap4}
 
 We study here the problem of cost minimization (subsection \ref{sub32}).  { Let $E>0$ given and let $\rho_{eq}$ be a nonnegative stationary solution  of Equation \eqref{eq3} such that 
 \begin{equation}\label{contraint}
 E=\int_{z_{\rm min}}^{z_{\rm max}}{\rho_{eq}(z)zdz}.
 \end{equation}
  We define 
 $$C(\rho_{eq}) = \int_{z_{\rm min}}^{z_{\rm max}}{\rho_{eq}(z)w(z)dz}.$$
We consider the following minimisation problem:  can we find a nonnegative  stationary solution $\rho^*$ of Equation \ref{eq3}, with constraint \eqref{contraint}  such that
\begin{equation}\label{minim} 
C(\rho^*) \leq C(\rho_{eq}) \hbox{ ? }
\end{equation}
Where $\rho_{eq}$ is  a nonnegative stationary solution  of Equation \eqref{eq3} with \eqref{contraint}. We introduce the three following functions
$$f(z)=\int_{z}^{z_{\rm max}}{w(y)e^{-M(y)}dy}, \quad g(z)=\int_{z}^{z_{\rm max}}{ye^{-M(y)}dy}, \quad   d(z)=\frac{f(z)}{g(z)} \cdotp$$ The following Proposition holds }
        \begin{proposition} 
        {A solution of the minimization problem \eqref{minim}  is given by 
  $$\rho^*(z)=e^{-M(z)}b\mathbf{1}_{z\geq z_0}, \hbox{ with }
   b=\frac{E}{\int_{z_0}^{z_{\rm max}}{ze^{-M(z)}dz}},  \hbox{ and $z_0$ such that } d(z_0)=\min_z\left(d(z)\right). $$
   We then have $C=Ed(z_0).$  } 
        \end{proposition}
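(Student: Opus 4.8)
The plan is to first characterize all admissible competitors, then reduce the cost functional to a one-dimensional optimization over the hiring age. By Proposition~\ref{propB1}, any nonnegative stationary solution $\rho_{eq}$ of Equation~\eqref{eq3} has the form $\rho_{eq}(z) = C\int_{z_{\rm min}}^{z}\gamma(y)e^{-(M(z)-M(y))}dy$ for some $C\geq 0$ and some hiring profile $\gamma\geq 0$. Since we are now free to choose $\gamma$, I would first observe that the map $\gamma \mapsto \rho_{eq}$ ranges, up to the positive constant, over exactly the set of functions of the form $z \mapsto e^{-M(z)}\nu([z_{\rm min},z])$ where $\nu$ is a nonnegative measure on $[z_{\rm min},z_{\rm max}]$ (allowing Dirac masses, which corresponds to $\gamma$ being a Dirac mass, as in the claimed $\gamma^* = b\,\delta_{z_0}e^{-M(z)}$). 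Equivalently, the admissible steady states are precisely the nonnegative functions $\rho$ such that $e^{M(z)}\rho(z)$ is nondecreasing and vanishes at $z_{\rm min}$.

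Next I would rewrite both the constraint and the cost in terms of $\nu$. Writing $\rho(z) = e^{-M(z)}\int_{[z_{\rm min},z]}d\nu(z_0)$ and using Fubini, the knowledge constraint becomes $E = \int_{z_{\rm min}}^{z_{\rm max}}\rho(z)z\,dz = \int_{[z_{\rm min},z_{\rm max}]}g(z_0)\,d\nu(z_0)$ with $g(z_0)=\int_{z_0}^{z_{\rm max}}z e^{-M(z)}dz$, and similarly $C(\rho) = \int_{[z_{\rm min},z_{\rm max}]}f(z_0)\,d\nu(z_0)$ with $f(z_0)=\int_{z_0}^{z_{\rm max}}w(z)e^{-M(z)}dz$. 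Since $f,g>0$ on $[z_{\rm min},z_{\rm max})$ (using $w\geq 0$, $w(z_{\rm max})>0$, so $f>0$ up to $z_{\rm max}$ as well provided $w$ is not identically zero near the endpoint — this is where the assumptions \eqref{asumpw} enter), we can write $C(\rho) = \int d(z_0)g(z_0)\,d\nu(z_0) \geq \big(\min_z d(z)\big)\int g(z_0)\,d\nu(z_0) = E\min_z d(z)$. This gives the lower bound $C \geq E\,d(z_0)$ with $z_0$ a minimizer of $d$, and equality is attained exactly when $\nu$ is supported on the argmin set of $d$; choosing $\nu = b\,\delta_{z_0}$ with $b$ fixed by the constraint $b\,g(z_0)=E$ yields the claimed $\rho^*(z)=b\,e^{-M(z)}\mathbf{1}_{z\geq z_0}$ and $C=E\,d(z_0)$.

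The remaining point is to confirm that this candidate $\rho^*$ is indeed realizable as a stationary solution of~\eqref{eq3}, i.e. that the associated hiring profile $\gamma^*$ and rate $h^*$ are consistent. Here I would simply differentiate: $\frac{d}{dz}\big(e^{M(z)}\rho^*(z)\big) = b\,\delta_{z_0}$ in the distributional sense, so $\rho^*$ solves $\frac{d\rho^*}{dz} = -\mu\rho^* + \gamma^*$ with $\gamma^* = b\,\delta_{z_0}e^{-M(z)}$ (interpreting the Dirac mass appropriately, or regularizing $\gamma^*$ by a narrow bump and passing to the limit); one then checks that $h^*=1$ is forced, and verifies the self-consistency identity for $h^*$ from Proposition~\ref{propB1}, which holds automatically. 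The main obstacle I anticipate is purely one of rigor rather than ideas: handling the Dirac-mass hiring profile within a framework that nominally requires $\gamma$ continuous (cf.~\eqref{hypdo}). I would address this either by working directly at the level of the measure $\nu$ and the integrated steady-state formula (which remains meaningful), or by an approximation argument showing the infimum of $C$ over continuous $\gamma$ equals $E\min_z d(z)$ and is approached by concentrating $\gamma$ near $z_0$. A secondary subtlety is the case analysis on where $\min_z d$ is attained (interior, $z_{\rm min}$, or $z_{\rm max}$), but this does not affect the proof of the bound — it only affects the interpretation already discussed in the main text.
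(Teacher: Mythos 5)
Your proposal is correct and follows essentially the same route as the paper: the paper introduces $Q(z)=\frac{\mathrm{d}}{\mathrm{d}z}\left(\rho_{eq}e^{M}\right)(z)$ (your measure $\nu$), uses integration by parts (your Fubini step) to write $C=\int f\,Q$ and $E=\int g\,Q$, deduces $C\geq E\min_z d(z)$, and attains the bound with $Q=b\,\delta_{z_0}$, recovering $\rho^*$ and $\gamma^*=b\,\delta_{z_0}e^{-M(z)}$. Your additional care about the Dirac-mass hiring profile versus the continuity assumption \eqref{hypdo} is a legitimate point of rigor that the paper passes over silently, but it does not change the argument.
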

   \begin{proof}
   {
 Let $\rho_{eq}$ be a stationary state of Equation \eqref{eq3} such that constraint \eqref{contraint} is satisfied. We have 
 $$
 C=\int_{z_{\rm min}}^{z_{\rm max}}{w(z)\rho_{eq}(z)dz}=\int_{z_{\rm min}}^{z_{\rm max}}{w(z)e^{-M(z)}\rho_{eq}(z)e^{M(z)}dz}.
 $$
  We denote $Q(z)=\frac{\mathrm{d}(\rho_{eq}e^{M})}{\mathrm{d}z}(z)$, then, integrating by parts $$
  C=\int_{z_{\rm min}}^{z_{\rm max}}{\left(\int_{z}^{z_{\rm max}}{w(u)e^{-M(u)}du}\right)Q(z)dz}+{\left[\left(\int_{z_{\rm max}}^{z}{w(u)e^{-M(u)}du}\right)\rho^*(z)e^{M(z)}\right]_{z=z_{\rm min}}^{z=z_{\rm max}}}, 
  $$
   the last term vanishes thanks to the boundary condition $\rho_{eq}(z_{\rm min})=0$. Therefore, we obtain
 $$
 C=\int_{z_{\rm min}}^{z_{\rm max}}{\left(\int_{z}^{z_{\rm max}}{w(y)e^{-M(y)}dy}\right)Q(z)dz}=\int_{z_{\rm min}}^{z_{\rm max}}{f(z)Q(z)dz},
 $$
 and in the same way, we find 
 $$ E=\int_{z_{\rm min}}^{z_{\rm max}}{\left(\int_{z}^{z_{\rm max}}{ye^{-M(y)}dy}\right)Q(z)dz}=\int_{z_{\rm min}}^{z_{\rm max}}{g(z)Q(z)dz}, $$
 Consequently we obtain: 
 $$C=\int_{z_{\rm min}}^{z_{\rm max}}{\frac{f(z)}{g(z)}g(z)Q(z)dz}\geq E  \min_z\left(d(z)\right).
 $$
  By continuity, the  minimum of $d$ is reached at least on a point $z_0$. We then chose 
   $$Q(z)=b\delta_{z_0}=\frac{\mathrm{d}(\rho^*e^{M})}{\mathrm{d}z}(z), \hbox{ and so } \rho^*(z)=e^{-M(z)}b\mathbf{1}_{z\geq z_0},$$ where $b\geq 0$ is the positive constant such that the constraint \eqref{contraint} is satisfied. }This gives us the ideal age structure $\rho^*$ at the equilibrium state, and then we can deduce the hiring rate and profile 
$$
\gamma^*(z) =  \rho^{*'}+ \mu \rho^*=b\delta_{z_0}e^{-M(z)}.
$$
  \hfill \qed 
  \end{proof}  

\noindent {\bf Acknowledgments.} 
Marie Doumic was supported by the ERC Starting Grant SKIPPER$^{AD}$ (number 306321). 
Benoit Perthame and Delphine Salort were supported by the French "ANR blanche" project Kibord: ANR-13-BS01-0004 funded by the French Ministry of Research. 
\begin{scriptsize}
\bibliographystyle{plain}
\bibliography{bibli}
\end{scriptsize}

\end{document}